\definecolor{ao}{rgb}{0.0, 0.5, 0.0}
\newcommand{\code}[1]{{\texttt{#1}}}
\newcommand{\cw}[1]{{\color{blue}\footnotesize[cw: #1]}}
\newcommand{\re}[1]{#1}
\newcommand{\boldparagraph}[1]{\vspace{7pt}\noindent\textbf{#1}}
\newcommand{\eat}[1]{}
\newcommand{\sys}{\textsc{SPECIAL}\xspace}
\newcommand{\planner}{\textsc{SPEplan}\xspace}
\newcommand{\pidx}{\textsc{SPEidx}\xspace}
\newcommand{\pop}{\textsc{SPEop}\xspace}
\newcommand{\card}{\textsc{SPEce}\xspace}
\newcommand{\ce}{\mathtt{CardEst}\xspace}
\newcommand{\synop}{\mathtt{synop}\xspace}
\newcommand{\vi}{\ensuremath{\mathsf{VIEW}}\xspace}
\newcommand{\negl}{\ensuremath{\mathsf{negl}}\xspace}
\newcommand{\lkg}{\ensuremath{\mathsf{Lkg}}\xspace}
\newcommand{\pp}{\ensuremath{\mathsf{pp}}\xspace}
\DeclareFontFamily{OMX}{MnSymbolE}{}
\DeclareSymbolFont{MnLargeSymbols}{OMX}{MnSymbolE}{m}{n}
\DeclareFontShape{OMX}{MnSymbolE}{m}{n}{
    <-6>  MnSymbolE5
   <6-7>  MnSymbolE6
   <7-8>  MnSymbolE7
   <8-9>  MnSymbolE8
   <9-10> MnSymbolE9
  <10-12> MnSymbolE10
  <12->   MnSymbolE12
}{}
\DeclareFontShape{OMX}{MnSymbolE}{b}{n}{
    <-6>  MnSymbolE-Bold5
   <6-7>  MnSymbolE-Bold6
   <7-8>  MnSymbolE-Bold7
   <8-9>  MnSymbolE-Bold8
   <9-10> MnSymbolE-Bold9
  <10-12> MnSymbolE-Bold10
  <12->   MnSymbolE-Bold12
}{}
\let\llangle\@undefined
\let\rrangle\@undefined
\DeclareMathDelimiter{\llangle}{\mathopen}%
                     {MnLargeSymbols}{'164}{MnLargeSymbols}{'164}
\DeclareMathDelimiter{\rrangle}{\mathclose}%
                     {MnLargeSymbols}{'171}{MnLargeSymbols}{'171}
\newcounter{theo}[section] \setcounter{theo}{0}
\renewcommand{\thetheo}{\arabic{section}.\arabic{theo}}
\newcounter{lekg}[section] \setcounter{lekg}{0}
\newcommand{\thelkg}{\arabic{section}.\arabic{lekg}}
\newcounter{prot}[section] \setcounter{prot}{0}
\renewcommand{\theprot}{\arabic{section}.\arabic{prot}}
\newcommand\vldbdoi{10.14778/3717755.3717764}
\newcommand\vldbpages{1035 - 1048}
\newcommand\vldbvolume{18}
\newcommand\vldbissue{4}
\newcommand\vldbyear{2024}
\newcommand\vldbauthors{\authors}
\newcommand\vldbtitle{\shorttitle} 
\newcommand\vldbavailabilityurl{URL_TO_YOUR_ARTIFACTS}
\newcommand\vldbpagestyle{empty} 
\begin{document}
\title{\sys: \re{S}yno\re{P}sis Assist\re{E}d Secure \re{C}ollaborat\re{I}ve \re{A}na\re{L}ytics}
\settopmatter{authorsperrow=4}

\author{Chenghong Wang}
\affiliation{%
  \institution{Indiana University}
}
\email{cw166@iu.edu}

\author{Lina Qiu}
\affiliation{%
  \institution{Boston University}
}
\email{qlina@bu.edu}

\author{Johes Bater}
\affiliation{%
  \institution{Tufts University}
}
\email{johes.bater@tufts.edu}

\author{Yukui Luo}
\affiliation{%
  \institution{Umass Dartmouth}
}
\email{yluo2@umassd.edu}

\begin{abstract}
Secure collaborative analytics (SCA) enables the processing of analytical SQL queries across data from multiple owners, even when direct data sharing is not possible. While traditional SCA provides strong privacy through data-oblivious methods, the significant overhead has limited its practical use. Recent SCA variants that allow controlled leakages under differential privacy (DP) strike balance between privacy and efficiency but still face challenges like unbounded privacy loss, costly execution plan, and lossy processing.

To address these challenges, we introduce \sys, the first SCA system that simultaneously ensures bounded privacy loss, advanced query planning, and lossless processing. \sys employs a novel {\em synopsis-assisted secure processing model}, where a one-time privacy cost is used to generate private synopses from owner data. These synopses enable \sys to estimate compaction sizes for secure operations (e.g., filter, join) and index encrypted data without additional privacy loss. These estimates and indexes can be prepared before runtime, enabling efficient query planning and accurate cost estimations. By leveraging one-sided noise mechanisms and private upper bound techniques, \sys guarantees lossless processing for complex queries (e.g., multi-join). Our comprehensive benchmarks demonstrate that \sys outperforms state-of-the-art SCAs, with up to $80\times$ faster query times, $900\times$ smaller memory usage for complex queries, and up to $89\times$ reduced privacy loss in continual processing.



\end{abstract}

\maketitle

{
\pagestyle{\vldbpagestyle}
\begingroup\small\noindent\raggedright\textbf{PVLDB Reference Format:}\\
\vldbauthors. \vldbtitle. PVLDB, \vldbvolume(\vldbissue): \vldbpages, \vldbyear.\\
\href{https://doi.org/\vldbdoi}{doi:\vldbdoi}
\endgroup
\begingroup
\renewcommand\thefootnote{}\footnote{\noindent
This work is licensed under the Creative Commons BY-NC-ND 4.0 International License. Visit \url{https://creativecommons.org/licenses/by-nc-nd/4.0/} to view a copy of this license. For any use beyond those covered by this license, obtain permission by emailing \href{mailto:info@vldb.org}{info@vldb.org}. Copyright is held by the owner/author(s). Publication rights licensed to the VLDB Endowment. \\
\raggedright Proceedings of the VLDB Endowment, Vol. \vldbvolume, No. \vldbissue\ %
ISSN 2150-8097. \\
\href{https://doi.org/\vldbdoi}{doi:\vldbdoi} \\
}\addtocounter{footnote}{-1}\endgroup


\ifdefempty{\vldbavailabilityurl}{}{
\begingroup\small\noindent\raggedright\textbf{PVLDB Artifact Availability:}\\
The source code, data, and/or other artifacts have been made available at \url{https://github.com/lovingmage/caplan}.
\endgroup
}
}

\fancypagestyle{plain}{%
  \fancyhf{} 
}
\pagestyle{plain} 

\section{Introduction} 
Organizations, such as hospitals, frequently hold sensitive data in separate silos to comply with privacy laws, despite the valuable insights that could be gained from sharing this information. Recent advancement of Secure Collaborative Analytics (SCA)~\cite{eskandarian2017oblidb, qin2022adore, roy2020crypt, wang2021dp, wang2022incshrink, bater2017smcql, bater2018shrinkwrap, liagouris2023secrecy, poddar2021senate, bater2020saqe, he2015sdb} provides an exciting solution to tackle this dilemma. These systems leverage advanced multi-party secure computation (MPC)~\cite{yao1986generate} primitives to empower multiple data owners, who previously could not directly share data, to collaboratively process analytical queries over their combined data while ensuring the privacy of each individual's data.




While MPC can effectively conceal data values~\cite{yao1986generate}, its security guarantees do not immediately extend to the protection of execution transcripts. Consequently, data-dependent processing patterns such as memory traces and read/write volumes can still reveal critical information, risking privacy breaches~\cite{cash2015leakage, kellaris2016generic, blackstone2019revisiting, oya2021hiding, grubbs2018pump, shang2021obfuscated, zhang2016all} even when the core data remains encrypted. To ensure strong privacy, modern SCA systems utilize {\em data-oblivious} primitives that exhaustively pad query processing complexities to a worst-case and data-independent upper bound~\cite{bater2017smcql, liagouris2023secrecy, poddar2021senate}. However, such stringent protections can largely reduce system efficiency and hinder the generalization of conventional optimization techniques to SCA, which are typically data-dependent~\cite{liagouris2023secrecy}. To address this, recent efforts~\cite{qin2022adore, wang2021dp, wang2022incshrink, bater2018shrinkwrap} have introduced Differentially Private SCA (DPSCA). This approach allows controlled information leakage under DP~\cite{dwork2014algorithmic} to mitigate constant worst-case overhead. For instance, systems under this model can dynamically compact an intermediate query size to a noisy estimate close to the actual size, avoiding exhaustive padding. As such, queries under DPSCA experience largely boosted efficiencies (e.g., up to $10^5\times$ faster~\cite{wang2022incshrink}) compared to their ``no leakage'' counterparts. Despite these substantial performance gains, existing DPSCAs still face critical limitations that impede their practical uses, as elaborated below:



\vspace{2.5pt}\noindent {\em $\bullet$ L-1. Unbounded privacy loss.}  Most DPSCA systems utilize a per-operator privacy expenditure model~\cite{bater2018shrinkwrap, wang2021dp, qin2022adore, chu2021differentially, chan2022foundations, wang2022incshrink}, meaning each query operator (e.g., join, filter) independently consumes a portion of the privacy budget. This approach can lead to either unbounded privacy loss or the forced cessation of query responses upon budget exhaustion. To mitigate this, some studies~\cite{zhang2023longshot, bogatov2021epsolute, qiudoquet} propose private, locality-sensitive grouping, incurring a one-time privacy cost to pre-group data based on specific attributes. Subsequent queries on those attributes can be directly applied to a smaller subset and need no additional privacy budget.  However, this method only supports simple queries (e.g., point and range); complex queries like joins still suffer from unbounded privacy loss.

\vspace{2.5pt}\noindent{\em $\bullet$ L-2. \re{Unoptimized execution plan.}} Conventional query planners can pre-estimate sizes for equivalent plans of a given query and select the most efficient plan with minimized intermediate sizes before execution~\cite{silberschatz2011database, blasgen1981system}. In contrast, SCA systems lack this capability, and even DPSCA designs~\cite{qin2022adore, wang2022incshrink, zhang2023longshot, wagh2021dp, bater2018shrinkwrap, bater2020saqe} can only reactively determine plan sizes during runtime. This inherent limitation often forces existing systems to settle for less efficient query plans, such as costly join orders, which lead to significantly inflated intermediate sizes (\S~\ref{sec:ete}) and substantially hinder performance.
    
\vspace{2.5pt}\noindent {\em $\bullet$ L-3. Lossy processing.} Noise from randomized mechanisms in DPSCA also introduces a unique accuracy issue \re{(e.g., conventional DP mechanisms may generate negative noise, applying which to obfuscate the sizes of intermediate query results can cause losing qualified real tuples)}, and unfortunately, no existing DPSCA can mitigate such loss for complex queries~\cite{wang2021dp,wagh2021dp,wang2022incshrink,zhang2023longshot,groce2019cheaper}. Furthermore, stronger DP settings can further increase noise variance, which amplifies errors, significantly impacting the utility of SCA systems.

\subsection{Overview of \sys}\label{sec:overview}
In this work, \re{we introduce \sys, an innovative SCA system that resolves the aforementioned limitations all at once through a new paradigm called {\em synopsis-assisted secure processing}.} At its core, \sys incurs a one-time privacy cost to gather DP synopses (statistics of base tables) from owners' data. These synopses are then used to accelerate complex query processing, and enhance SCA query planning. Notably, \sys is the first system to provide all of the following benefits: (1) {\em Bounded privacy}—\re{the privacy loss in \sys is strictly limited to the one-time synopses release stage, with absotely no additional privacy cost during complex query processing and planning;} (2) {\em Advanced query planning}—it builds an advanced SCA planner that can exploit plan sizes before runtime; and (3) {\em Lossless processing}—it ensures exact results with no data omissions.
 An overview
of \sys is shown in Figure~\ref{fig:flow}.

\begin{figure}[h]
\centering
\includegraphics[width=0.65\linewidth,interpolate=false]{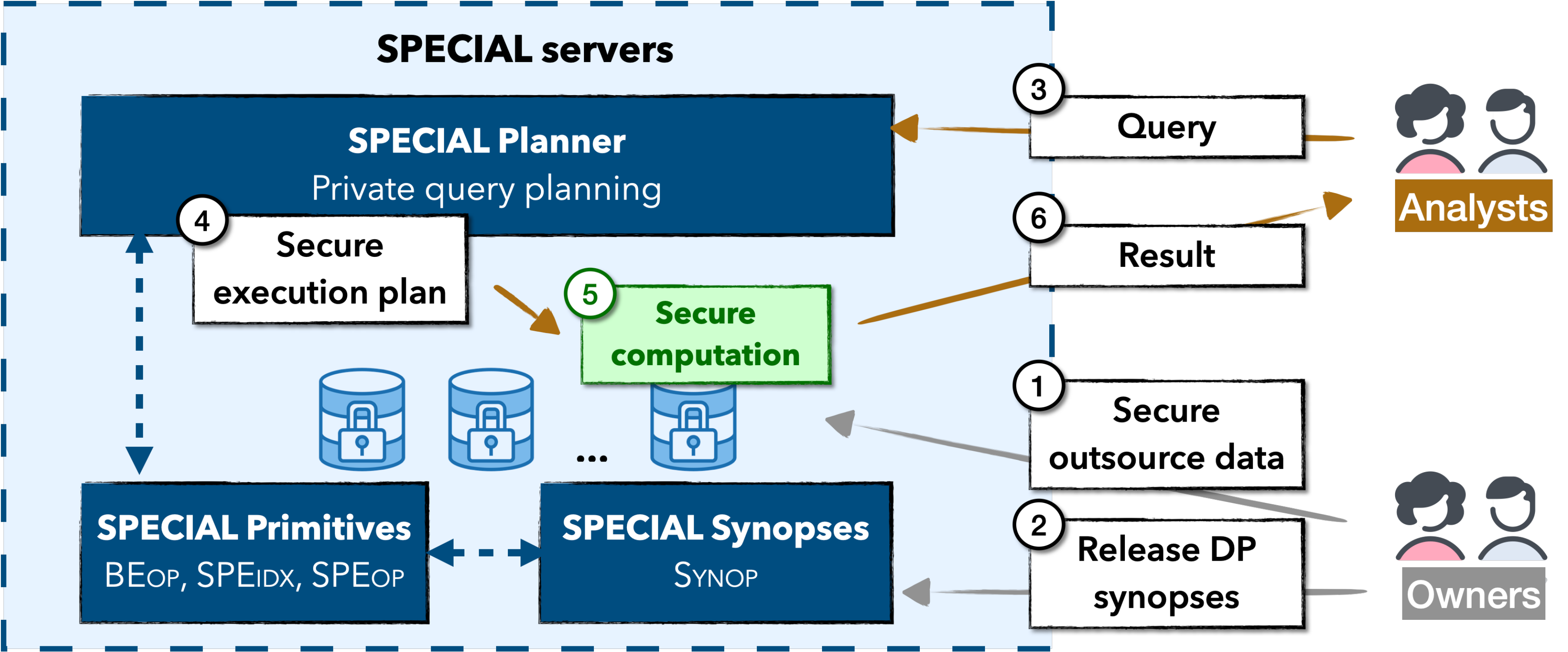}
\caption{Overview of \sys workflow.}
\label{fig:flow}
\end{figure}

\sys operates under a standard server-aided MPC model~\cite{kamara2011outsourcing} with three key participants: data owners, at least two \sys servers, and a vetted analyst. The process begins with data owners securely outsourcing their data, typically through secret sharing~\cite{kamara2011outsourcing, wang2021dp, zhang2023longshot, wang2022incshrink}, and privately releasing corresponding DP synopses (\S~\ref{sec:prim}) to the servers. \sys introduces a set of novel primitives (\S~\ref{sec:primitive}) that can leverage these synopses to accelerate secure query operations. 
Once the data and synopses are in place, analysts can submit \re{Select-Project-Join-Aggregation (SPJA) queries~\cite{silberschatz2011database}} for analytics. To process queries, a private planner (\S~\ref{sec:planner}), running on \sys servers, strategically orchestrates \sys primitives to process the query and optimize performance. Finally, the results are securely returned to the analyst.

\subsection{Unique challenges and key contributions}\label{sec:contributions} Leveraging DP synopses in SCA holds significant promise for achieving our desired objectives. However, this also introduces unique challenges. Below, we highlight the key challenges and summarize our non-trivial contributions to address them:

\label{challenge-c1}\vspace{2.5pt}\noindent {\em $\bullet$ C-1. How to select proper synopses?} Even for a single relation, one can find numerous attribute combinations for generating synopses.
Improper selection can lead to large errors (e.g. using too many synopses or high-dimensional attributes~\cite{zhang2021privsyn}), or reduced functionalities (e.g., using only simple attributes~\cite{zhang2023longshot}). Hence, a key challenge is selecting a limited set of DP synopses to optimize the privacy budget for complex query processing. Our approach is informed by two observations: (i) secure joins are resource-intensive and need prioritized acceleration, and (ii) synopses for common filtering predicates are vital as they allow pre-built indexes on base relations for fast access. Consequently, we propose a focused strategy (\S~\ref{sec:prim}) that targets low-dimensional (1D and 2D) attributes frequently involved in joins and filters within a representative workload.

\label{challenge-c2}\vspace{2.5pt}\noindent {\em $\bullet$ C-2. How to enforce lossless processing?} Private synopses do not immediately implies lossless guarantees. Thus, a second challenge is designing practical approaches to achieve lossless results without violating privacy goals. To address this, we employ one-sided DP noise (either strictly positive or negative, \S~\ref{sec:prim}) in generating synopses, and design novel primitives (\S~\ref{sec:primitive}) based on them to pessimistically estimate filter cardinalities and intervals of index structures. 
To ensure lossless processing of complex joins, we extend upon cutting-edge join upper bound techniques~\cite{hertzschuch2021simplicity} to privately estimate lossless join sizes using DP synopses. To our knowledge, this is the first study to support private join upper bound estimation. 

\label{challenge-c3}\vspace{2.5pt}\noindent {\em $\bullet$ C-3. How DP synopses can empower efficient query processing?} The use of DP synopses in SCA is largely underexplored, leaving a knowledge gap regarding their potential to enhance query efficiency. To navigate this potential, we explore various use of synopses in accelerating secure processing including private indexes \pidx (\S~\ref{subsec:dpindex}), and compacted oblivious operations \pop (\S~\ref{sec:eops}). \re{ We also design a novel private query planner (\S~\ref{sec:planner}) that efficiently orchestrate the execution of \sys primitives (e.g., \pidx, \pop) to process SPJA queries. The planner uses available synopses to privately estimate intermediate result sizes and operation costs for a set of equivalent execution plans of a given query. It then executes the one with the lowest estimated cost.}


\vspace{2.5pt}\noindent {\em $\bullet$ C-4. How to systematically evaluate \sys?} A major is the absence of open benchmarks. We address this by initiating an open-source evaluation set, accessible to the public. Specifically, we use public financial data~\cite{RelationalDataFinancial} and design eight test queries, ranging from simple linear queries to complex 5-way joins. We also re-produce an open version of the HealthLNK benchmark. We evaluate our prototype, \sys, against the state-of-the-art (SOTA) DPSCA system, Shrinkwrap~\cite{bater2018shrinkwrap}, and the conventional SCA system, SMCQL~\cite{bater2017smcql}.  Results indicate that \sys outperforms Shrinkwrap, reducing query latency by up to $80.3\times$, and SMCQL, with at least a $114\times$ reduction in query latency. Additionally, \sys improves memory efficiency in complex join processing by more than $900\times$ compared to both systems. Moreover, scaling experiments show that \sys can effectively scale up to 8.8M rows dataset and up to 9-way complex joins. All benchmarks, including our prototype implementation, are open-sourced and available at~\cite{special2024}.

\eat{
\section{Overview}
In this section, we first outline the key ideas that enable \sys to
fulfill the aforementioned research goals in Section~\ref{sec:ki}.  Subsequently, we provide an overview of \sys's architecture, and its security guarantees in Sections~\ref{sec:arch} and~\ref{sec:security-guarantee} respectively.

\eat{
\subsection{Design Principles and Key Ideas}\label{sec:ki}

\noindent{\bf KI-1. Noisy statistics-based private query planner.} The foremost challenge in achieving \textbf{G-1} is that disclosing exact cardinality statistics for sub-plans outputs' can enable enough reconstruction attacks against data owners' private data~\cite{cash2015leakage, blackstone2019revisiting, kellaris2016generic, bater2018shrinkwrap, wang2022incshrink, wang2021dp}, and thus compromising the privacy guarantees. To address this, our key idea is to design a cost-based planner that makes use of only differentially private cardinalities rather than true statistics. In this way, our planner not only takes advantage of sub-plans' cardinalities but also guarantees provable privacy. 

\vspace{3pt}\noindent{\bf KI-2. DP Compacted oblivious operators.}  Another big challenge for achieving \textbf{G-1} stems from the design of MPC operators, which are devised to be {\it oblivious} (i.e. data independent) to combat leakage-abuse attacks. Consequently, their outputs often include dummy tuples padded to the maximum possible size~\cite{liagouris2023secrecy}. This padding may result in the actual query execution unable to benefit from an optimized query plan. To address this, our key idea is to introduce compacted MPC operators that stay oblivious but have outputs compacted based on DP-preserved cardinalities.

\vspace{3pt}\noindent{\bf KI-3. Pessimistic cardinality estimation via DP synopsis.} Applying standard DP mechanisms, i.e. Laplace mechanism, to compute the noisy cardinalities ``on-the-fly'' is evidently a straightforward way to gather statistics for our planner and to guide DP-compacted operators in reducing the result size. However, two significant issues arise. First, such an approach can rapidly accumulate privacy loss, compelling \sys to either cease responding to future queries upon exhausting the allocated privacy budget or risk incurring an unbounded privacy loss. Second, standard DP noises are symmetric, implying that noisy cardinalities may provide an underestimate of the actual statistics. Relying on these underestimated statistics to guide DP-compacted operators in reducing their outputs can result in the omission of certain expected outputs. Both of these issues contravene the design goal \textbf{G-2}. Our key idea to address these challenges is two-folds: (i) To ensure bounded privacy loss, we derive our cardinality estimation from pre-computed DP synopses other than adopting the ``on-the-fly'' estimation. With this method, only the synopsis generation phase consumes the privacy budget, whereas further operations on the synopsis fall under post-processing and have no additional privacy cost; (ii) To enforce query accuracy, our cardinality estimation algorithm is tailored to consistently overestimate the anticipated outputs, ensuring that no expected results are omitted during compactions.

\vspace{3pt}\noindent{\bf KI-4. SARGable SCA queries through the DP-index and DP-indexed views.} To enforce obliviousness, the data access of conventional SCA systems typically requires a sequential scan across the entire dataset, coupled with a secure per-tuple comparison to label the accessed data~\cite{bater2017smcql, liagouris2023secrecy,bater2018shrinkwrap, poddar2021senate}, and thus, hard to be SARGablized. To overcome this limitation (\textbf{G-3}), we introduce mechanisms enabling the SCA administrator to construct a DP-index on a selected attribute and establish a set of DP-indexed views (columnar replicas) derived from available DP synopses. Moreover, our framework guarantees that the generation of these objects is (i) entirely oblivious, and (ii) incurs no extra privacy overhead.
}

\subsection{\sys Architecture}\label{sec:arch}
In this work, we follow the widely-adopted server-aided MPC architecture to construct SCA systems, with the following specifications: (1) A set of data owners each owns a private data $D_i$ represented as a relational table. While the schema is publicly known, the data instances remain concealed; (2) Data owners secretly share their private data with a group of servers, who then collaborate using MPC protocols to securely plan and orchestrate queries over the secret-shared data; (3) An authorized client can issue queries on the union set of the private data, i.e. $D=\{D_i\}_{i \geq 1}$.

Similar to standard databases, each execution plan here is a tree-structured layout detailing (i) the required transformation steps for query processing, and (ii) the exact secure operators used to implement each transformation.

\vspace{3pt}\noindent{\bf \sys workflow.}

\vspace{3pt}\noindent{\bf Threat model and privacy guarantee.}
$\sys$ considers the same threat model as the state-of-the-art (SOTA) SCA implementation, Secrecy~\cite{liagouris2023secrecy}. In general, we consider there exists an ``honest-but-curious'' adversary that can corrupt up to $n-1$ (out of $n$) data owners and at most one of the three servers. Our privacy guarantees require that such an adversary can only learn corrupted parties' data, and a leakage profile $\lkg$. Most existing SCA designs~\cite{poddar2021senate, liagouris2023secrecy, bater2017smcql} (including Secrecy) deem \lkg to be entirely independent of the data owned by those uncorrupted parties. This, however, restricts the utilization of sub-plan's cardinalities as all intermediate results will have to be padded to the max. Hence, we adopt a relaxed privacy model, namely {\it secure protocols with DP leakage}~\cite{wang2023private, wagh2021dp, groce2019cheaper, wang2021dp,wang2022incshrink, bater2018shrinkwrap} that permits $\lkg$ to disclose a small amount of information related to the uncorrupted parties' data, but with provable constraints on how much information obtainable via $\lkg$. Specifically, we posit that any information an adversary can obtain regarding a single tuple within any uncorrupted owner's data is bounded by differential privacy. 
In addition, we consider the client to be an authorized entity, permitted to retrieve the plaintext query outcomes without supplemental obfuscations.

We stress that, in fact, the design of $\sys$ imposes no changes to the underlying MPC protocol. That said, the security guarantees and threat model of \sys directly follow those of the foundation MPC. For instance, if the foundation MPC tolerates $m-1$ out of $m$ server corruptions with malicious security (i.e. AGMPC~\cite{agmpc}), the system implemented by applying \sys design on the top of this protocol will uphold the same criteria.
\cw{to be continued}

}
\section{Background}\label{sec:def}
\noindent{\bf General notations.} We consider the logical database $\mathcal{D}$ to contain multiple private (base) relations \(\{D_1, D_2,...\}\), where each relation $D_i$ is owned by a specific party $P_i$. A base relation \(D\) (we omit subscript for simplicity) has a set of attributes \(attr(D)\). The domain of an attribute \(A \in attr(D)\) is denoted by \(dom(A)\), and the combined domain of a collection of attributes \(\mathbf{A} = \{A_1, A_2,...\} \subseteq attr(D)\) is denoted by \(dom(\mathbf{A}) = \prod_{A\in\mathbf{A}}dom(A)\). For a tuple $t\in D$, and $\mathbf{A}\subseteq attr(D)$, we use $t.\mathbf{A}$ to denote the attribute value of $\mathbf{A}$ in $t$. \re{A logical query, represented by $q(\mathcal{D})$, applies transformations and computations on $\mathcal{D}$. In this work, we focus on SPJA~\cite{silberschatz2011database} queries. }\\ 
\smash{\underline{Frequency (count).}} Given $D$, $\mathbf{A}\subseteq attr(D)$, and a set of values $\mathbf{v} \in dom(\mathbf{A})$, the frequency (count) of $\mathbf{v}$ in $D$ is the total number of tuples $t\in D$ with $t.\mathbf{A} = \mathbf{v}$. In addition, the max frequency moments (MF) of $\mathbf{A}$ is defined as ${\mathtt{mf}}(\mathbf{A}, D) = \max_{\mathbf{v}\in dom(\mathbf{A})}| \{ t\in D \mid t.\mathbf{A} = \mathbf{v}\}|$.\\
\smash{\underline{Histograms.}} Given $D$, and $\mathbf{A} \subseteq attr(D)$, the (equal-width) histogram $\mathbf{h}(\mathbf{A}, D) = (c_1, c_2,...,c_m)$ is a list of counts for the attribute values in $\mathbf{A}$. Specifically, $\mathbf{h}$ partitions $dom(\mathbf{A})$ into $m$ ``equal-sized'' domain intervals $(B_1,...,B_m)$, and a count $c_i \in \mathbf{h}$ is the number of tuples $t\in D$ with $t.\mathbf{A}$ in the interval of $B_i$. 

\re{\vspace{3pt}\noindent{\bf Query planning.} Modern databases parse queries into physical plans~\cite{silberschatz2011database} that can be executed by the underlying query engine. These plans specify the operations like scans, joins, and sorts, and the order in which they're performed. The same query can have multiple equivalent execution plans, but their performance can vary greatly depending on resource usage and data access patterns. Query planning~\cite{blasgen1981system}, done before runtime, involves selecting cost-efficient plans from these options. A key part of this process is accurately estimating intermediate result sizes, known as cardinality estimation (CE)~\cite{harmouch2017cardinality}, which relies heavily on table statistics. Two crucial statistics in modern cardinality estimation methods are: (i) {\em histograms,} which are vital for estimating selectivities in filters, and (ii) {\em max frequency,} which is crucial for estimating join sizes.}

\vspace{3pt}\noindent{\bf Multi-party secure computation (MPC).} MPC~\cite{yao1986generate, goldreich2009foundations, micali1987play, ben2019completeness} is a cryptographic technique that allows multiple parties $P_1, P_2,...$ to jointly compute a function $f(x_1, x_2,...)$ over their own private input $x_i$. MPC ensures no unauthorized information is revealed to any party, except the desired output of $f$, emulating a computation as if performed by a trusted third party. Traditional MPC required all parties to actively participate in intensive computations. However, recent server-aided MPC~\cite{kamara2011outsourcing, scholl2017s, mohassel2017secureml} schemes allow offloading computations to powerful servers, without sacrificing security. In this model, parties secretly share their inputs with servers, which jointly evaluate an MPC protocol to reconstruct the secrets and compute the function. 

\vspace{3pt}\noindent{\bf Differential privacy~\cite{dwork2014algorithmic}.} DP ensures that modifying a single input tuple to a mechanism produces only a negligible change in its output. To elaborate, consider $D$ and $D'$ as two relations differing by just one tuple, then DP defines the following.

\begin{definition}[$(\epsilon, \delta)$-DP]\label{bg:dp} \emph{Given $\epsilon>0$, and $\delta \in (0,1)$. A randomized mechanism $\mathcal{M}$ is said to be $(\epsilon, \delta)$-DP if for all ${D}\sim{D'}$ pairs, and any possible output $o\subset Range(\mathcal{M})$, the following holds:
$$\textup{Pr}\left[\mathcal{M}(D) \in o\right] \leq e^{\epsilon} \textup{Pr}\left[\mathcal{M}(D') \in o\right] + \delta$$}
\end{definition}




\vspace{2pt}\noindent{\bf Secret sharing and secure array.} \sys uses the $2$-out-of-$2$ boolean secret share~\cite{araki2016high} over ring $\mathbb{Z}_{2^{32}}$  for securely outsourcing owners' data and storing query execution results. Specifically, each data, $x$, is divided into two shares: $x1, x2$ that are uniformly distributed over the ring $\mathbb{Z}_{2^{32}}$ such that $x = x1 \oplus x2$. Each server $S_i$ receives one secret shares, $s_i$, where $i\in \{0,1\}$. By retrieving shares from any two servers, an authorized party can successfully reconstruct the value of $x$. However, a single server alone learns nothing about $x$. For clarity and to abstract out the lower-level details, we leverage a logically unified data structure, namely the secure array~\cite{wang2022incshrink, bater2018shrinkwrap}, denoted as $\langle \mathbf{x}\rangle = \left(\langle x_1\rangle, \langle x_2\rangle,...\right)$, which is a collection of secret-shared relational tuples.


\vspace{3pt}\noindent{\bf Oblivious (relational) operators.} Oblivious operators are data-independent MPC protocols that implement the same functionalities as their plaintext database counterparts (e.g., filter and join). Data-independent execution requires that the control flow and memory access patterns of a function are indistinguishable given different inputs of the same size, and typically requires costly computation. For example, a linear scan is required to fulfill oblivious filtering~\cite{zheng2017opaque}, and join requires nested-loop over the two inputs~\cite{eskandarian2017oblidb}. \re{The output sizes of such operators are usually padded with dummy tuples to the worst case: $N$ rows for filters and $N^2$ for joins, given size $N$ inputs. The dummy tuples will not affect the query result but can significantly impact the performance~\cite{zheng2017opaque, eskandarian2017oblidb}. To enhance efficiency while maintaining strong privacy, DPSCAs introduce a new type of oblivious operators~\cite{bater2018shrinkwrap, qin2022adore, qiudoquet, wang2022incshrink}. These operators typically involve two steps: {\em Compute} and {\em Compact}. The {\em Compute} step is fully oblivious, while the {\em Compact} resizes the output, often by obliviously sorting valid tuples to the front and trimming the output to a noisy DP size (the true size plus DP noise). While this approach can significantly reduce the computation cost and query sizes, it may lead to lossy query processing if the DP size is smaller than the true size (e.g., negative DP noise), as {\em valid tuples could be excluded during the compaction}~\cite{wang2021dp,wang2022incshrink, zhang2023longshot}. We emphasize that when DP sizes exceed true sizes, there is no accuracy loss as it only includes extra dummy data that do not impact accuracy~\cite{zheng2017opaque, eskandarian2017oblidb}.

\vspace{3pt}\noindent{\bf Private indexes.} In conventional databases, indexes are powerful data structures that map attribute values to positions in a sorted array, allowing a predicate selection to quickly access the desired data via index lookup without the need for full table scan. However, traditional indexes are unsuitable in SCAs due to their data-dependent nature, which can easily lead to privacy breaches. To address this, recent research has proposed DP indexes~\cite{roy2020crypt}, where the mapping of attribute values to their positions is intentionally distorted with DP noise. To process queries, the system first pre-fetch a small set of data using DP indexes, followed by oblivious selection. This effectively avoids full table oblivious scan and sorting-based result compaction. However, the uncertainty inherent in DP indexes can lead to the loss of valid tuples. For example, in the DP index of~\cite{roy2020crypt}, a true index range of positions $[10, 20]$ might be distorted into positions $[12, 17]$, causing data at position $10, 11, 18, 19$ and $20$ to be missed. Nevertheless, {\em if DP indexes overestimate the range, subsequent oblivious selection can losslessly identify all valid tuples.}
}



\section{System and Privacy Model}\label{sec:modeli}

\re{In general, \sys follows a standard server-aided MPC~\cite{kamara2011outsourcing} model, involving (i) a set of mutually distrustful data owners ${P_1,...,P_n}$, (ii) two non-colluding servers $S_0$ and $S_1$, and (iii) a trusted analyst. We assume an {\em admissible adversary}~\cite{mohassel2017secureml} $\mathcal{A}$, capable of corrupting $n-1$ out of $n$ clients and at most one of the two servers. An instance of such adversary can be a malicious server that creates Sybil owners to form a malicious collation, attempting to steal sensitive information from an honest owner. Additionally, $\mathcal{A}$ is considered honest-but-curious, meaning it follows the protocol without deviation but may try to infer information from observed protocol transcripts, such as randomness, memory access patterns, and communication messages. The combination of these information is referred to as the view of $\mathcal{A}$. We also assume $\mathcal{A}$ is computationally bounded as a probabilistic polynomial time (p.p.t.) adversary, which is a standard requirement in MPC protocols to ensure that adversaries cannot break cryptographic primitives. This threat model is consistent with prior SCA designs~\cite{bater2017smcql, bater2018shrinkwrap, wang2022incshrink, mohassel2017secureml, wang2021dp}. Given this setup, we design \sys to satisfy the following:
}
\begin{definition}[MPC protocol with DP leakage]\label{def:privacy-def}{\em Given a set of parties (owners) $P_i$ with private data $D_i$ and a secure query protocol $\Pi$ that applies over $\mathcal{D}=\{D_1,D_2,\cdots\}$. We define a randomized mechanism $\lkg(\mathcal{D})=\{\lkg(D_1),\lkg(D_2),\cdots\}$ as the leakage profile, consisting of the control flow and access patterns of running $\Pi$ over $\mathcal{D}$. The protocol $\Pi$ is said to be secure with DP leakage if, for the subset of uncorrupted parties with data $\mathbf{D}\subseteq\mathcal{D}$, leakage profile $\lkg(\mathbf{D})\subseteq\lkg(\mathcal{D})$, and any p.p.t. adversary $\mathcal{A}$:
\begin{itemize}[leftmargin=10pt]
    \item $\lkg(\mathbf{D})$ satisfies $(\epsilon, \delta)$-DP (definition~\ref{bg:dp}).
     \item There exists a p.p.t. simulator $\mathcal{S}$ with only access to public parameters $\pp$ and $\lkg(\mathbf{D})$ that satisfies:
     \begin{equation}
    \begin{split}
        & \textup{Pr}\left[\mathcal{A}\left(\vi^{\Pi}(\mathcal{D}, \pp)=1\right)\right] \\
        & \leq  \textup{Pr}\left[\mathcal{A}\left(\vi^{\mathcal{S}}(\lkg(\mathbf{D}), \pp)\right)=1\right] + \negl(\kappa)
    \end{split}
\end{equation}
\end{itemize}
where \re{$\vi^{\Pi}$ is $\mathcal{A}$'s view in $\Pi$'s execution and $\vi^{\mathcal{S}}$ is a simulated view produced by $\mathcal{S}$ using $\mathsf{Lkg}$; $\mathsf{pp}$ denotes all public parameters, and $\negl(\kappa)$ is a negligible function related to a security parameter $\kappa$.}
}
\end{definition}

\re{ Simply put, Definition~\ref{def:privacy-def} requires that the knowledge any p.p.t. adversary adversary can gain about each individual tuple of an honest owner, by observing the protocol execution, is bounded to what can be inferred from the outputs of the $(\epsilon, \delta)$-DP mechanism $\lkg$. We stress that this notion focuses on DP at the event (tuple) level without loss of generality. Due to the group-privacy properties of DP~\cite{dwork2010differential, kifer2011no, xiao2015protecting, vadhan2017complexity}, event-level DP can be extended to user-level DP. For instance, in a logical database $\mathcal{D}$ where any single user owns at most $l$ tuples, if a protocol satisfies $(\epsilon, \delta)$ event-level DP, it also satisfies $(l\epsilon, le^{(l-1)\epsilon}\delta)$ user-level DP. Moreover, we say that \sys can be relaxed to employ a weaker corruption model, such as requiring a supermajority of owners and servers to remain uncorrupted, to enhance efficiency~\cite{liagouris2023secrecy,tan2021cryptgpu}. This adjustment does not change the privacy guarantee outlined in Definition~\ref{def:privacy-def}, but it does affect the security assumptions. Under the relaxed corruption model, Definition~\ref{def:privacy-def} is only satisfied when at least two-thirds of the parties remain uncorrupted.}  Due to space concerns, we defer the complete privacy proof of \sys to our full version~\cite{special2024online}.

\section{$\sys$ Synopses}\label{sec:prim}
We now discuss the details of private synopses used in \sys, while in later sections we will show how they accelerate query processing (\S~\ref{sec:primitive}) and aid in query planning (\S~\ref{sec:planner}).



\vspace{3pt}\noindent{\bf Challenges.} We reiterate the main challenges in generating private synopses for a relation include: \hyperref[challenge-c1]{(C-1)} selecting a set of attribute combinations that enable functional and efficient query processing; \hyperref[challenge-c2]{(C-2)} ensuring that the subsequent query processing based on the private synopses is lossless.



\vspace{3pt}\noindent{\bf Key ideas.} Given a SPJA query, join operations typically need prioritized acceleration, as they are more resource-intensive than other operations. A $k$-way join can have $O(n^k)$ complexity without optimizations~\cite{bater2017smcql, bater2018shrinkwrap, wang2022incshrink, zheng2017opaque, eskandarian2017oblidb}. Additionally, synopses for frequently queried filter attributes play an important role in efficient query processing, as they enable fast indexing (\S~\ref{subsec:dpindex}) and effective filtering of unnecessary data before heavy joins. As such, our first key idea is to focus on histograms-based synopses that cover frequently queried join and filter attributes. To minimize noise, we focus on low-dimensional synopses: only 1D or 2D histograms.



To address the second challenge, our approach combines two strategies. First, to support accurate indexing and filtering, we use one-sided DP noise to generate DP histograms that consistently overestimate or underestimate attribute distributions. We will show later that such special histograms allow lossless filtering and indexing that reliably overestimate true filter sizes and indexing ranges (\S~\ref{subsec:dpindex}). Second, for lossless join output compaction, we incorporate noisy max frequency moments (MF) into the synopses. MF allows us to build on advanced join upper bound techniques~\cite{hertzschuch2021simplicity} to privately estimate join sizes without data loss (\S~\ref{subsec:dpindex}).

\subsection{Synopses generation}\label{sec:snop_gen} 
We now elaborate on the details of synopsis generation, which mainly contains two phases: (i) Attributes selection, where the \sys servers select appropriate attributes for the generation, which are then distributed to owners; (ii) Local synopses release, where the owners create corresponding synopses using a DP mechanism and upload them to \sys servers.


\vspace{3pt}\noindent{\bf Attributes selection (servers).} The first step is to identify a set of attributes for deriving synopses. In general, we consider the existence of a representative workload, $Q_{\mathsf{R}}$~\cite{kotsogiannis2019privatesql}, which can be sourced from a warm-up run or annotated by the administrator. Note that the representative workload does not involve any private data and thus is leakage-free. The servers first identify representative attribute pairs, \(\mathsf{pair} = \{\mathsf{pair}_k\}_{k\geq 1}\), for each private relation \(D \in \mathcal{D}\) via \(Q_{\mathsf{R}}\). The designated pairs include: (i) 2-way attribute pairs, which correspond to frequently queried \emph{filter-join key} combinations; (ii) frequently queried individual attributes not covered by these pairs. By default, each $\mathsf{pair}_k = (A_{\mathsf{ft}}, A_{\mathsf{j}})$ contains two valid attributes (case i), but either $A_{\mathsf{ft}}$ or $A_{\mathsf{j}}$ may be empty (case ii).

\vspace{3pt}\noindent{\bf Synopses release (owners).} Next, servers pushes the identified pairs to owners, and subsequently, the owners independently dispatche private synopses and return them to servers.  We now focus on the DP synopses generation mechanism run by each owner, Algorithm~\ref{algo:synopgen} illustrates the workflow.


\begin{algorithm}[]
\caption{DP synopsis gen $\mathcal{M}_{\synop}$ (in the view of $P$)}
\begin{algorithmic}[1]
\Statex
\textbf{Input}: $\mathsf{pair} = \{\mathsf{pair}_k\}_{k\geq 1}$ from servers; private data $D$.
\State $P$ self-determines privacy parameters $\epsilon, \delta$, and init $\synop \gets \emptyset$
\For{each $\mathsf{pair}_k$}
\State $\mathbf{h}(\mathsf{pair}_k, D) \gets \mathsf{HistGen} (\mathsf{pair}_{k}, D)$
\Statex \textcolor{gray}{\underline{DP histograms:}}
\State $\mathbf{h}^{+}(\mathsf{pair}_k, D) \gets \mathbf{h} + \mathsf{Lap}^{+}(\epsilon, \delta, \mathbf{h}.\mathsf{shape})$
\State $\mathbf{h}^{-}(\mathsf{pair}_k, D) \gets \mathbf{h} + \mathsf{Lap}^{-}(\epsilon, \delta, \mathbf{h}.\mathsf{shape})$
\Statex\Comment{{\em adding independently sampled noise to every bin of $\mathbf{h}^{+}$, $\mathbf{h}^{-}$} }
\Statex \textcolor{gray}{\underline{DP max frequencies:}}
\If{$A_{\mathsf{j}} \in \mathsf{pair}_k = \emptyset$ {\bf or} $A_{\mathsf{j}}$ is unique valued} ~ $\mathtt{MF}_{k} = \emptyset$
\ElsIf{$A_{\mathsf{ft}} \in \mathsf{pair}_k \neq \emptyset$}
\Statex \Comment{{\em assuming $\mathbf{h}$ partitions $dom(A_{\mathsf{ft}})$ into $\{B_1,...,B_m\}$}}
\State $D^{\ell} \gets \sigma_{A_{\mathsf{ft}} \in B_{\ell}}(D)$ {\bf for} $\ell = 1,2,...,m$
\State compute noisy MF table, $\mathtt{MF}_{k}=\{\widehat{\mathtt{mf}}(A_{\mathsf{j}}, D^{\ell})\}_{1\leq\ell\leq m}$
\Else{~\smash{$\mathtt{MF}_{k}=\widehat{\mathtt{mf}}(A_{\mathsf{j}}, D)$}}
\EndIf
\State \smash{$\synop \gets \synop \cup (\mathsf{pair}_k, \{{\bf h}^{+}, {\bf h}^{-}\}, \mathtt{MF}_{k})$}
\EndFor 
\State {\bf release} $\synop$, $\epsilon$, $\delta$ to servers
\end{algorithmic}
\label{algo:synopgen}
\end{algorithm} 
{
In general, we expect owners to set a desired privacy budget for their data (using parameters \( \epsilon \) and \( \delta \)). Algorithm~\ref{algo:synopgen} produces synopses formalized as :
\begin{definition}[\sys synopses] \label{def-synop}{\it Given \(Q_{\mathsf{R}}\), we consider for each relation \(D\), its corresponding synopsis $\synop$ is the collection of \(\{(\mathsf{pair}_k, \mathbf{H}(\mathsf{pair}_k, D), \mathtt{MF}_k)\}_{k\geq 1}\), such that
\begin{itemize}[leftmargin=10pt]
    \item $\mathsf{pair}_k = (A_{\mathsf{ft}}, A_{\mathsf{j}}) \in Q_{\mathsf{R}}$ is a frequently queried attribute pair.
    \item \(\mathbf{H}(\mathsf{pair}_k, D) = \{\mathbf{h}^{+}, \mathbf{h}^{-}\}\) is the DP bounding histogram for $\mathsf{pair}_k$, where $\mathbf{h}^{+}$ (resp. $\mathbf{h}^{-}$) is a DP histogram that overestimate (resp. underestimate) the true histogram of  $\mathsf{pair}_k$. 
    \item \(\mathtt{MF}_k\) represents a collection of privately overestimated join key MFs categorized by \(\mathsf{pair}_k.A_{\mathsf{ft}}\).
\end{itemize} }
\end{definition}
}

{
We now present a detailed explanation of generating these synopsis structures, starting with the private bounding histograms (Alg~\ref{algo:synopgen}, lines 2:5). Specifically, for each $\mathsf{pair}_k$, each owner first constructs a histogram $\mathbf{h}(\mathsf{pair}_k, D)$. By default, we assume there exists global parameters (e.g., bin sizes) for each attribute, that ensure consistent data partitioning among all owners. Next, the owner derives two noisy histograms, $\mathbf{h}^{+}(\mathsf{pair}_k, D)$ and $\mathbf{h}^{-}(\mathsf{pair}_k, D)$, by adding independently sampled one-sided Laplace noise (Definition~\ref{def:1side}) to every bin of $\mathbf{h}(\mathsf{pair}_k, D)$. This guarantees that \smash{\(\mathbf{h}^{+}\)} always overestimates the true histogram, while \smash{\(\mathbf{h}^{-}\)} consistently underestimates it.
 }

\begin{definition}[One-sided Laplace variable]\label{def:1side}\emph{$\mathsf{Lap}^{+}(\epsilon, \delta) = \max(0,z)$ (resp. $\mathsf{Lap}^{-}(\epsilon, \delta) = \min(0,z)$) is a one-sided Laplace random variable in the range of $[0, \infty)$ (resp. $(-\infty, 0]$)  if $z$ is drawn from a distribution with the following density function
\vspace{-.5mm}
\begin{equation}
    \begin{split}
        \textup{Pr}\left[z = x\right] = \frac{e^{\epsilon}-1}{e^{\epsilon}+1}e^{-\epsilon |x - \mu|}
    \end{split}
\end{equation}
where $\mu = 1-\frac{1}{\epsilon} \ln(\delta (e^{\epsilon} + 1))$ (resp. $\mu = \frac{1}{\epsilon} \ln(\delta (e^{\epsilon} + 1))-1$).}\vspace{-.5mm}
\end{definition}  

{Next, we detail the generation of (noisy) join key MFs (Alg~\ref{algo:synopgen} lines 6:10). We assume that both $A_{\mathsf{ft}}$ and $A_{\mathsf{j}}$ are non-empty and that \(\mathbf{h}(\mathsf{pair}_k, D)\) partitions \(dom(A_{\mathsf{ft}})\) into bins $\{B_1, \dots, B_m\}$. Each owner then generates a table of noisy MFs, \smash{\(\{\widehat{\mathtt{mf}}(A_{\mathsf{j}}, D^{\ell})\}_{1 \leq \ell \leq m}\)}, where each entry \smash{\(\widehat{\mathtt{mf}}(A_{\mathsf{j}}, D^{\ell})\)} represents an independently generated MF statistic for the join key attribute \(A_{\mathsf{j}}\), calculated over a specific subset of data filtered by the attribute \(A_{\mathsf{ft}}\), such that}
\begin{equation}\label{eq:mf}
\widehat{\mathtt{mf}}(A_{\mathsf{j}}, D^{\ell}) \gets \widehat{\max}_{\epsilon}\left(\mathcal{G}_{\mathbf{count}(A_{\mathsf{j}})}\left(\sigma_{A_{\mathsf{ft}}\in B_{\ell}}(D)\right)\right)
\end{equation}  
here $\mathcal{G}_{\mathbf{count}(A_{\mathsf{j}})}$ is a group-by-count operation over $A_{\mathsf{j}}$, and $\widehat{\max}_{\epsilon}$ is a {\it report noisy max} mechanism~\cite{dwork2014algorithmic}. It first adds i.i.d. noise from the exponential distribution $\mathsf{Exp}(\frac{2}{\epsilon})$ to each grouped count, then outputs the largest noisy count. We stress that \(\mathcal{M}_{\synop}\) will not generate noisy MFs for non-join key attributes, and when \(A_{\mathsf{ft}}\) is empty, a global MF will be generated instead of MF tables (Alg~\ref{algo:synopgen}:10). Moreover, since \sys enables owners to label attributes as unique-valued, if \(A_{\mathsf{j}}\) is known to be unique-valued, then \(\smash{\widehat{\mathtt{mf}}(A_{\mathsf{j}}, \cdot)}\) is always 1. Nevertheless, as exponential noises are non-negative, thus \smash{$\widehat{\mathtt{mf}} \geq \mathtt{mf}$} holds for all cases. 

\begin{theorem}\label{tm:priv} Given $|\mathsf{pair}| = c$, $\epsilon, \delta>0$, the synopsis generation (Algorithm~\ref{algo:synopgen}) is $(\hat{\epsilon}, \hat{\delta})$-DP where $\hat{\epsilon} \leq 6\epsilon\sqrt{c\ln(1/\delta)}$, and $\hat{\delta} = (c+1) \delta$
\end{theorem}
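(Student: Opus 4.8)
The plan is to bound the privacy cost of each elementary noise injection in $\mathcal{M}_{\synop}$, identify which releases are computed on \emph{disjoint} data so that parallel composition applies, and then glue the surviving sequential pieces together with advanced composition~\cite{dwork2014algorithmic}. The first thing to fix is the composition geometry: every synopsis for a relation $D$ is computed from the \emph{same} rows, so a neighbouring pair $D,D'$ differing in one tuple perturbs all $c$ pairs at once, and across pairs we must pay by sequential composition. Within a single $\mathsf{pair}_k$, however, two structural facts keep the cost from scaling with the domain size: (i) each tuple lands in exactly one histogram bin, so $\mathbf{h}(\mathsf{pair}_k,D)$ and $\mathbf{h}(\mathsf{pair}_k,D')$ differ in a single coordinate by $1$; and (ii) the filtered sets $D^{\ell}=\sigma_{A_{\mathsf{ft}}\in B_{\ell}}(D)$ feeding the MF table are pairwise disjoint. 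Both let me treat a whole histogram (resp. a whole MF table) as a \emph{single} sensitivity-$1$ query under parallel composition rather than as $m$ separate ones.

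First I would establish the three per-pair building blocks. For the bounding histogram I would show that adding $\mathsf{Lap}^{+}(\epsilon,\delta)$ (resp. $\mathsf{Lap}^{-}(\epsilon,\delta)$) of Definition~\ref{def:1side} to a sensitivity-$1$ count is $(\epsilon,\delta)$-DP. This is the standard shifted-then-truncated Laplace analysis: in the bulk the density ratio between neighbours is $e^{\pm\epsilon}$ because of the $e^{-\epsilon|x-\mu|}$ factor, and the only outputs whose likelihood ratio can exceed $e^{\epsilon}$ are those manufactured by the truncation at $0$; the shift $\mu=1-\tfrac{1}{\epsilon}\ln(\delta(e^{\epsilon}+1))$ is chosen precisely so that the displaced mass $\prob{z<0}$ is at most $\delta$ (a one-line integral confirms this). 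With observation (i), each of $\mathbf{h}^{+},\mathbf{h}^{-}$ is then $(\epsilon,\delta)$-DP \emph{independent of the bin count}. For the frequencies I would invoke report-noisy-max~\cite{dwork2014algorithmic}: since $\mathtt{mf}(A_{\mathsf{j}},\cdot)$ has sensitivity $1$, adding $\mathsf{Exp}(2/\epsilon)$ noise to each group count and reporting the largest is $\epsilon$-DP; by observation (ii) and parallel composition the full table $\{\widehat{\mathtt{mf}}(A_{\mathsf{j}},D^{\ell})\}_{\ell}$ is \emph{still} $\epsilon$-DP, and the degenerate branches (empty/unique $A_{\mathsf{j}}$, or the global-MF case) are either $0$-cost or a single $\epsilon$-DP release.

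Next I would compose. Each $\mathsf{pair}_k$ emits at most two $(\epsilon,\delta)$-DP histograms and one $(\epsilon,0)$-DP MF table, so its $\delta$ slack originates only from the truncated histograms while the exponential-noise table contributes none. Running this over the $c$ pairs I would apply advanced composition with the free parameter pinned to $\delta'=\delta$: the dominant term is $\sqrt{2k\ln(1/\delta)}\,\epsilon$ for $k=\Theta(c)$ elementary $\epsilon$-mechanisms, and the residual $k\epsilon(e^{\epsilon}-1)$ term is lower order in the small-$\epsilon$ regime, so collecting constants gives $\hat\epsilon\le 6\epsilon\sqrt{c\ln(1/\delta)}$. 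The accumulated failure probability is the per-pair histogram $\delta$-mass plus the advanced-composition slack $\delta'=\delta$, which the bookkeeping consolidates to $\hat\delta=(c+1)\delta$.

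I expect the main obstacle to be the first building block: $\mathsf{Lap}^{\pm}$ is not an off-the-shelf mechanism, so the $(\epsilon,\delta)$ claim demands isolating the boundary region cleanly and checking that the specific $\mu$ drives the displaced mass strictly below $\delta$ (not merely $O(\delta)$). The second delicate point is purely accounting: to land on $(c+1)\delta$ rather than a factor-$2$ weaker $(2c+1)\delta$, the two truncation-failure events of $\mathbf{h}^{+}$ and $\mathbf{h}^{-}$ must be consolidated into one $\delta$-event per pair — natural here because $\mathbf{h}^{+}\ge\mathbf{h}\ge\mathbf{h}^{-}$, so a common draw truncates at most one side — while keeping these $\delta$-carriers separate from the pure-DP MF tables so the total stays linear in $c$. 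The parallel-composition observations (i)--(ii) are exactly what stop either bound from degrading with the histogram bin count $m$ or the join-key domain size.
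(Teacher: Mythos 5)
Your proposal follows essentially the same route as the paper's proof: you establish that the shifted, truncated one-sided Laplace noise on a sensitivity-$1$ count is $(\epsilon,\delta)$-DP, use parallel composition across disjoint histogram bins and across the disjoint filtered sets $D^{\ell}$ feeding the report-noisy-max releases, and then apply advanced composition across the $c$ pairs with the slack parameter set to $\delta'=\delta$. This is exactly the paper's decomposition, down to the loose absorption of the residual $k\epsilon(e^{\epsilon}-1)$ term into the constant $6$ (the paper is equally cavalier there).

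The one place you genuinely depart from the paper is the $\hat\delta$ bookkeeping, and there your fix does not match the mechanism as specified. Algorithm~\ref{algo:synopgen} adds \emph{independently sampled} noise to every bin of $\mathbf{h}^{+}$ and $\mathbf{h}^{-}$, and the two one-sided distributions are centered at different values of $\mu$, so the common-draw coupling you invoke (``a common draw truncates at most one side'') is not available. With independent draws, each of $\mathbf{h}^{+},\mathbf{h}^{-}$ carries its own $\delta$-event, and the standard accounting yields $(2c+1)\delta$ rather than $(c+1)\delta$. To be fair, the paper shares this looseness rather than resolving it: its in-text sketch derives $(2\epsilon,2\delta)$-DP for $\mathbf{H}$ and then asserts each pair is at most $(3\epsilon,\delta)$-DP, and the appendix proof never derives $\hat\delta$ at all. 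So your instinct that some consolidation is required to reach $(c+1)\delta$ is correct and more careful than the paper's treatment, but the coupling argument as stated is a gap; proving the theorem's exact $\hat\delta$ for the algorithm as written would require either modifying the mechanism to share randomness or accepting the weaker $(2c+1)\delta$.
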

For space concern, we move complete proofs to the full version~\cite{special2024online}. In a sketch, adding $\mathsf{Lap}^{+}$ (or $\mathsf{Lap}^{-}$) to a single bin is $(\epsilon, \delta)$-DP. By parallel and sequential composition, generating $\mathbf{{H}}$ is $(2\epsilon, 2\delta)$-DP. Moreover, each noisy max is $(\epsilon, 0)$-DP, and by parallel composition, the generation of the entire MF table is also $(\epsilon, 0)$-DP. In this way, we know that the generation of each $(\mathsf{pair}_k, \mathbf{H}, \mathtt{MF}_k)$ is at most $(3\epsilon, 2\delta)$-DP. Given there are in total $c$ such pairs, and thus the total privacy loss is subject to $c$-fold advanced composition~\cite{dwork2014algorithmic}.

\vspace{3pt}\noindent{\bf Synopsis transformations.} We say that one can perform transformations on released synopses without incurring extra privacy loss, per the post-processing theorem of DP~\cite{dwork2014algorithmic}. Now, we outline the key synopsis transformation relevant to \sys's design. First, given any (2d) bounding histogram \(\mathbf{H}(\mathsf{pair}, D)\) with both \(A_{\mathsf{ft}}, A_{\mathsf{j}} \in \mathsf{pair}\) are non-empty, one can derive the (1d) bounding histograms, i.e. \(\mathbf{H}(A_{\mathsf{j}}, D)\) and \(\mathbf{H}(A_{\mathsf{ft}}, D)\), for any single attribute \(A_{\mathsf{ft}}\) or \(A_{\mathsf{j}}\) by marginal sums $\mathbf{h}^{+}, \mathbf{h}^{-} \in \mathbf{H}(\mathsf{pair}, D)$ over $A_{\mathsf{j}}$ or $A_{\mathsf{ft}}$, respectively. This enables the creation of statistics on individual attributes, even when \(A_{\mathsf{ft}}\) and \(A_{\mathsf{j}}\) are not included as a standalone synopsis attribute. Moreover, it's possible to derive relevant join key statistics following a selection on the base relation. For example, given $A_{\mathsf{ft}}, A_{\mathsf{j}} \in \mathsf{pair}\neq \emptyset$, and let $D'\gets\sigma_{A_{\mathsf{ft}} \in vals}(D)$, one can obtain the (1d) bounding histogram $\mathbf{H}(A_{\mathsf{j}}, D')$ by conducting a selective marginal sum of $\mathbf{h}^{+}, \mathbf{h}^{-} \in \mathbf{H}(\mathsf{pair}, D)$ over bins of $A_{\mathsf{ft}}$ that intersect with \(vals\). Beyond bounding histograms, join key MFs over pre-filtered data can also be computed by  
\begin{equation}~\label{eq:mf}
    \widehat{\mathtt{mf}}(A_{\mathsf{j}}, D') = \min\left(\textstyle \sum_{B_{\ell} \cap vals \neq \emptyset} \widehat{\mathtt{mf}}(A_{\mathsf{j}}, D^{\ell}),  \widehat{\mathtt{mf}}(A_{\mathsf{j}}, D)\right)
\end{equation}
Note that \(\widehat{\mathtt{mf}}(A_{\mathsf{j}}, D)\) exists if \(A_{\mathsf{j}}\) is also included as a standalone synopsis attribute; otherwise, Eq~\ref{eq:mf} yields only the first term.

\eat{

\noindent\underline{Non-expanding join $\hat{R}_0 \bar{\Join}_{\theta} \hat{R}_1$}. We consider a join operation on an attribute with set semantics (distinct values) as non-expanding, and its output size will not exceed $\min(|\hat{R}_0|, |\hat{R}_1|)$. Moreover, we may derive an even tighter upper bound if join key statistics are available. For instance, let $\mathbf{\tilde{h}}_0, \mathbf{\tilde{h}}_1 \in \mathbb{Z}_{+}^n$ to be the noisy histogram on join key $\theta$ for $\hat{R}_0$, and $\hat{R}_1$, respectively. Then $\card(\hat{R}_0 \bar{\Join}_{\theta} \hat{R}_1) = \min\left(|\hat{R}_0|, |\hat{R}_1|, \sum_i \min_{b\in\{0,1\}}(\mathbf{\tilde{h}}_b[i])\right)$. Since all schemas are shared and unique-valued attributes are labeled, non-expanding joins can be identified without privacy loss.

}
\section{\sys Primitives}\label{sec:primitive}
We next introduce the secure primitives in \sys. 
One major challenge in designing these primitives is the knowledge gap on how private synopses can accelerate oblivious query processing, i.e., \hyperref[challenge-c3]{C-3}. To address this, we explore various usage of synopses, including creating private indexes (\pidx~\S~\ref{subsec:dpindex}) and designing compacted oblivious operations (\pop~\S~\ref{sec:eops}). Given that joins are the most resource-intensive operations, we optimize join algorithms by combining private indexing and compaction techniques to develop a novel, parallel-friendly oblivious join (\S~\ref{sec:eops}). Another challenge is ensuring lossless processing, which we tackle by integrating mechanisms that pessimistically estimate selection cardinalities, indexing ranges, and join sizes using synopses (\S~\ref{sec:prim}) and advanced upper bound techniques~\cite{hertzschuch2021simplicity}. For simplicity, we assume all input relations are of size \( n \) and all 1D histograms contain \( m \) bins.

\eat{
\begin{table}[h!]
  \centering
  \begin{tabular}{@{}lll@{}}
    \toprule
    Primitives & Asymptotic cost & Output size \\ 
    \midrule
    \code{SELECT} & $O(n)$ & $n$ \\ 
    \code{JOIN} & $O(n\log^2n)$ & $n^2$ \\ 
    \code{COUNT}, \code{SUM}, \code{MIN}, \code{MAX}  & $O(n)$ & constant \\
    \code{ORDER-BY}, \code{DISTINCT}, \code{GROUP-BY} & $O(n\log^2n)$ & $n$ \\
     \code{(OP)SELECT} & $O(n\log n)$ & $\hat{c}$ \\
     \code{(SP)SELECT} & $O(n)$ & $\hat{c}$ \\
     \code{(DC)SELECT} & $O(1)$ & $\hat{c}$ \\
     \code{(MF)JOIN} & $O(n^2\log n)$ or $O(\hat{c} n^2)$ & MF bound \\
     \code{(IDX)JOIN} & $O(\mathbf{h}_0\cdot\mathbf{h}_1)$  &  $\mathbf{h}_0\cdot\mathbf{h}_1$\\
    \ldots & \ldots & \ldots \\
    \bottomrule
  \end{tabular}
  \caption{Comparison of Algorithm Complexities and Output Sizes}
  \label{tab:algorithm_comparison}
\end{table}
}


\vspace{1mm}

\subsection{Basic Operations}\label{sec:pops}
\sys supports conventional fully-oblivious operators~\cite{bater2017smcql}, which logically the same as non-private ones but with data-independent execution and worst-case padding for results. We briefly introduce these operations: (i) {\bf Default data access (\code{SeqACC}).} By default, query execution begins with loading all data into a secure array via sequential scan. Each loaded tuple gets a secret bit \(\mathsf{ret}\) (initially `0'), marking its validity; (ii) {\bf \code{SELECT}.} This secure filter \(\sigma_p(R)\) performs a linear scan over the secure array \( \langle R\rangle \), updating the \(\mathsf{ret}\) bit to `1' for tuples satisfying predicate \( p \) and `0' for others; (iii) {\bf\code{PROJECT}.} Removes irrelevant attributes from relation \( \langle R\rangle \), but retains the \(\mathsf{ret}\) bit; (iv) {\bf \code{JOIN}.} Implements a secure \( \theta \)-join \(R_0 \Join_{\theta} R_1\) by computing the cartesian product \( \langle R_0 \times R_1\rangle \) and marking joined tuples with \code{SELECT}. The output is padded to the worst-case maximum size; (v) {\bf \code{COUNT}, \code{SUM}, \code{MIN/MAX}.} These aggregation operators scan the secure array and update a secret-shared aggregation value for each tuple; (vi) {\bf\code{ORDER-BY}, \code{DISTINCT}, \code{GROUP-BY(AGG)}.} Built on the oblivious sort primitive~\cite{batcher1968sorting}. \code{ORDER-BY} sorts the array by a given attribute. \code{DISTINCT} sorts and then identifies unique tuples, marking only the last in a sequence of identical tuples with \(\mathsf{ret} = \text{`1'}\). \code{GROUP-BY(AGG)} first uses \code{DISTINCT} to find unique tuples. For each distinct tuple (\( \mathsf{ret} \) set to `1'), it appends an aggregation value derived from the tuple and a dummy attribute (e.g., `-1') for non-distinct tuples.

\subsection{\sys Index \pidx}\label{subsec:dpindex}
Existing index techniques in SCA have several drawbacks: loss of qualified data~\cite{roy2020crypt}, reliance on intricate data structures with large overhead~\cite{zhang2023longshot, bogatov2021epsolute, qiudoquet}, and restricted query support~\cite{zhang2023longshot, bogatov2021epsolute}. Furthermore, all these techniques only support indexing the base relations. \pidx offers a breakthrough by enabling the creation of lossless indexes directly on outsourced data and the query of intermediate results, eliminating the need for extra structures or storing dummy data.





In general, \pidx builds upon the typical indexing model that utilizes cumulative frequencies (CF)~\cite{kraska2018case}. Specifically, given \(D\) sorted by \(A\in attr(D)\), all records \(t \in D\) where \(t.A = x\) can be indexed by the interval \([g(x-1), g(x)]\), where \(g(x) = |\{t \mid t.A \leq x\}|\) is the CF function. For better illustration, we show an example index lookup in Figure~\ref{fig:index}: to get all records with an attribute value of 24, one may compute \([g(23), g(24)] = [217, 248]\) and access the relevant data from the subset \(D[217:248]\). To make this indexing method private and lossless, the key idea of \pidx is to derive two noisy CF curves from synopses (i.e., bounding histograms). One curve, \(g^{+}(x)\), consistently overestimates \(g(x)\), while the other, \(g^{-}(x)\), consistently underestimates it. Then for any attribute value \(x\), we can now derive a private interval \([g^{-}(x-1), g^{+}(x)]\) that losslessly indexes all the desired records. As illustrated in Figure~\ref{fig:index}, the \sys index might estimate the index range for attribute value 24 as \([g^{-}(23), g^{+}(24)] = [198, 267]\).





\begin{figure}[h]
\centering
\includegraphics[width=0.95\linewidth,interpolate=false]{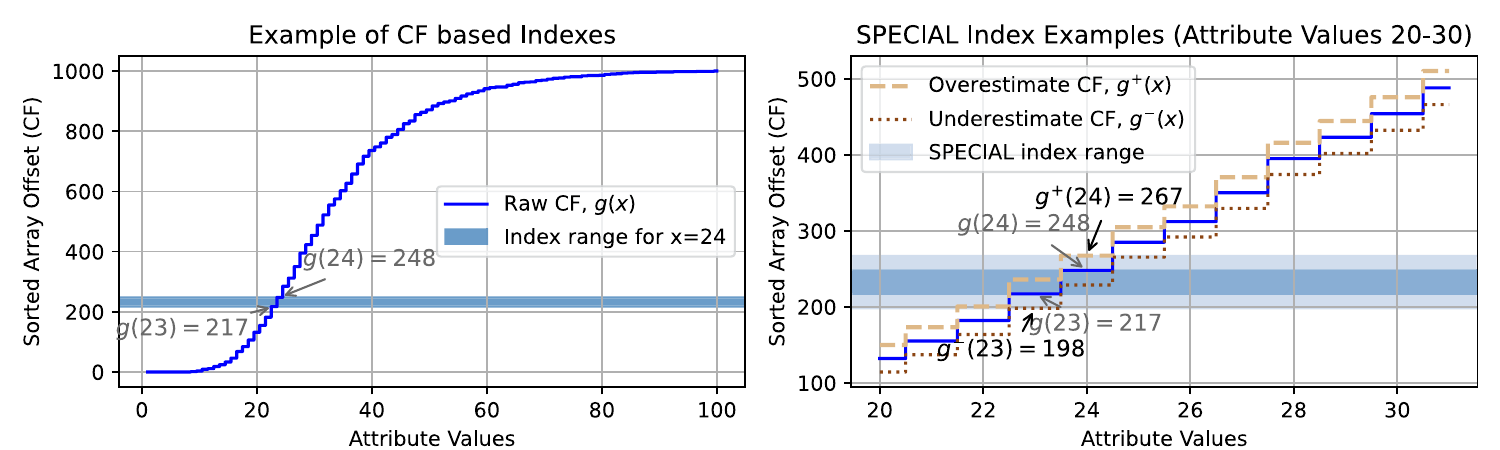}
\caption{True (left) vs. \sys (right) index for $x=24$}
\label{fig:index}
\end{figure}

In what follows, we provide the formal explanations on how \pidx derives indexes from DP synopses. Specifically, \pidx first determines the bounding histograms $\mathbf{H}(A, D)$, which may be either transformed from an available 2D histogram $\mathbf{H}(\mathsf{pair}, D)$ with $A\in \mathsf{pair}$, or sourced directly if $\mathbf{H}(A, D)$ is already included in the synopses. It then constructs the noisy mapping as follows:
\begin{definition}[\sys index] \emph{Given $D$ sorted by $A$, the bounding histogram $\mathbf{H}(A, D) = \{\mathbf{h}^{+}, \mathbf{h}^{-}\}$, and assume $\mathbf{h}^{+}=(c^{+}_1, ...,c^{+}_m)$, \(\mathbf{h}^{-}=(c^{-}_1, ...,c^{-}_m)\) partitions $dom(A)$ into $\{B_1,...,B_m\}$. We say $\pidx(A, D) = \{\mathsf{idx}_i=[\mathsf{lo}_i, \mathsf{hi}_i]\}_{ 1\leq i\leq m}$ is the \sys index of $D$ over $A$ with: 
\begin{itemize}
    \item $\forall~i\geq 1$, $\mathsf{hi}_i = \min(|D|, \sum^{i}_{k=1} c^{+}_k)$.
    \item $\mathsf{lo}_1 =0$, and $\forall~i\geq 2$, $\mathsf{lo}_i = \sum^{i-1}_{k=1} \max(0, c^{-}_k)$.
\end{itemize} 
}
\end{definition}

By this construction, all tuples $t \in D$ such that $t.A \in B_i$ will be organized into the subset $D[\mathsf{idx}_i] \subseteq D$. This subset can be quickly accessed if $D$ is already sorted, without the need for special data structures or inclusion of dummy tuples. Depending on how bounding histograms are constructed, $\pidx(A, D)$ can support indexing lookups with varying granularity. This can range from indexing individual attribute values (where each $B_i$ corresponds to a single domain value) to indexing a range of of values. 
The bounding histogram's pessimistic estimation ensures that all tuples where \(t.A \in B_i\) are accurately contained within \(D[\mathsf{idx}_i]\), thereby achieving lossless indexing. In contrast to existing methods that are limited to indexing base relations \cite{roy2020crypt, bogatov2021epsolute, zhang2023longshot}, \pidx extends its capabilities to create private indexes on query intermediate results. For instance, consider \(D' \gets \sigma_{A^*\in vals}(D)\) where the attribute pair \((A^*, A)\) is included in \(\synop\). Here, \pidx can derive \(\mathbf{H}(A, D')\) from \(\mathbf{H}(A^*, D)\) and subsequently build indexes on \(D'\). Importantly, since index creation is a post-processing procedure using available DP synopses, it incurs no additional privacy loss.





\vspace{3pt}\noindent{\bf Indexed store and fast data access \code{IdxAcc}.} \pidx enables a new storage layout for outsourced data, namely indexed datastore. Specifically, by analyzing a representative workload $Q_{\mathsf{R}}$, one may identify the ``hottest'' attribute per base relation, sort them according to the ``hottest'' attribute, and then build indexes over the sorted data. This storage layout enables fast indexed access (\code{IdxAcc}) to retrieve a compact subset of data from the outsourced relations, thereby eliminating the need for a full table sequential scan (\code{SeqAcc}) and can directly produce a compact input. We emphasize that the \sys design does not require replicating the outsourced datastore to accommodate multiple query types~\cite{bogatov2021epsolute, zhang2023longshot}. However, creating compact replicas (e.g., column replicas~\cite{huang2020tidb} over frequently queried attributes) can be optionally employed to enhance query processing speed. Moreover, the generation of all aforementioned objects (indexed store and column replicas) requires only three primitives: projection, oblivious sorting, and \pidx. In other words, this implies that one can selectively adjust these objects to align with dynamic query workloads, without incurring extra privacy loss.



\subsection{\sys Operators \pop}\label{sec:eops}
We introduce \pop, a set of novel synopsis-assisted operators that maintain full obliviousness, 
while enabling lossless compaction. 
To our knowledge, \pop is the first primitive of its kind in any SCA.




\vspace{3pt}\noindent{\bf Oblivious compaction: \textbf{\code{OPAC}}}. is a fundamental operation critical to other \pop primitives. Given input \( \langle R \rangle \), \code{OPAC} sorts it based on the secret bit \( \mathsf{ret} \), moving tuples with \( \mathsf{ret} = \) `1' to the front. Then, \code{OPAC} retains only the first $k$ tuples from the sorted array. The compaction is {\it lossless} if \( k \) is greater than or equal to the number of tuples with \( \mathsf{ret} = \) `1'; otherwise, it is {\it lossy}.

\vspace{3pt}\noindent{\bf \sys selections: \code{(OP)SELECT}, \code{(SP)SELECT}, \code{(DC)SELECT}.} Let $R$ to be a relation and $A\in attr(R)$, we now introduce three advanced selections that implements \(\sigma_{A\in vals}(R)\). 

\noindent\underline{\code{(OP)SELECT}.} is mainly implemented based on the oblivious compaction (\code{OPAC}) operation. Specifically, the operation first conducts a standard \code{SELECT} on the input secure array \( \langle R \rangle \) to label selected tuples, followed by an \code{OPAC} to to eliminate a large portion of non-matching tuples. To determine the compaction size $cs$, \code{(OP)SELECT} examines the synopsis of \(R\) and pessimistically estimates the cardinality of \(\sigma_{A\in vals}(R)\) as shown in Algorithm~\ref{algo:cardest}. Since $cs$ never underestimates the actual cardinality, and thus, the compaction is lossless with no missing tuples. Moreover, as \code{OPAC} is fully oblivious and \( cs \) is determined completely from post-processing over DP synopsis, thus, \code{(OP)SELECT} causes no privacy loss. 

\begin{algorithm}[]
\caption{$\ce(\sigma_{A\in vals}(R), \synop)$}
\begin{algorithmic}[1]
\State $\mathbf{h} = \emptyset$, $c=0$
\If{$\mathbf{H}(A, R) \in \synop$}~$\mathbf{h} \gets \mathbf{h}^{+} \in \mathbf{H}(A, R)$
\ElsIf{$\exists~ \mathsf{pair} \in \synop$, s.t. $A \in \mathsf{pair}$}
\State $\mathbf{h} \gets$ marginal sum $\mathbf{h}^{+} \in \mathbf{H}(A, R)$ over $\left(\mathsf{pair} \setminus A\right)$.
\Else ~{\bf return} $cs=|R|$
\EndIf
\State {\bf return} $cs = \min(|R|, \sum_{i=1}^{m}c_i \in \mathbf{h} : \left(B_m \cap vals \neq \emptyset\right))$
\end{algorithmic}
\label{algo:cardest}
\end{algorithm}

\vspace{1mm}
\noindent\underline{\code{(SP)SELECT}.} The running complexity of \code{(OP)SELECT} depends on \code{OPAC}, which is typically linearithmic (see \S~\ref{subsec:cm} or~\cite{sasy2022fast}). However, when $\ce(\sigma_{A\in vals}(R), \synop)$ is relatively small, oblivious selection can be achieved without necessarily incurring linearithmic cost. Specifically, we consider \code{(SP)SELECT}, which first creates an empty output array $\langle R_o \rangle$ with size equals to $cs$ before any computations. Next, it evaluates two linear scans over $\langle R \rangle$, where the first scan obliviously marks all selected tuples, and in the second scan, it privately writes all marked tuples into $\langle R_o \rangle$. Specifically, in the second scan, \code{(SP)SELECT} internally maintains the last {\it actual write} position $\mathsf{idx}$ in $\langle R_o \rangle$. Then for every newly accessed tuple \( \langle t\rangle \) in \( \langle R \rangle \), a write action occurs on all tuples in \( \langle R_o \rangle \). If \( \langle t\rangle \) is selected, then an {\it actual write} is made that writes \( \langle t\rangle \) to $\langle R_o[\mathsf{idx}+1] \rangle$ and a {\it dummy write} is made to elsewhere. If not, dummy writes are made throughout \( \langle R_o \rangle \). We say that, in the context of the secret-shared secure array $\langle\mathbf{a}\rangle$, a dummy write to $\langle\mathbf{a}[i]\rangle$ is simply a re-sharing of $\mathbf{a}[i]$ through secure protocols without changing its value. 

\vspace{1mm}
\noindent\underline{\code{(DC)SELECT}.} Finally, if the underlying data is already indexable on \(A\), a direct pre-fetch can be applied to avoid full table scan and compaction. The operator simply looks up \(\pidx(A, R)\), and accesses $R[a, b]$, where $a=\min_i(\mathsf{idx}_i.\mathsf{lo}), 
b= \max_i(\mathsf{idx}_i.\mathsf{hi})$, and $\mathsf{idx}_i$ dentoes the index in \(\pidx(A, R)\) with bin $B_i \cap [a,b] \neq \emptyset$. A standard \code{SELECT} is then applied to $R[a, b]$. 




\vspace{3pt}\noindent{\bf \sys join: \code{(MX)JOIN}.} We now introduce a novel MF-Index based oblivious join operation. The advancements of \code{(MX)JOIN} stand out in three aspects. First, compared to the standard \code{JOIN}, \code{(MX)JOIN} stands out for its ability to significantly compact the output size, coupled with a highly parallelizable fast processing mode. Second, existing DP oblivious joins typically require spending privacy budget~\cite{dong2022r2t} to learn join sensitivity~\cite{dong2022r2t} or necessitate truncation on joined tuples~\cite{wang2022incshrink, bater2018shrinkwrap}. \code{(MX)JOIN} eliminates this need. Moreover, \code{(MX)JOIN} is unique as the first oblivious join that enables lossless output compaction without extra privacy loss. We illustrate the construction details in Algorithm~\ref{algo:mxjoin}.

\begin{algorithm}[ht]
\caption{$\code{(MX)JOIN}$ (base and pre-filtered relations)}
\begin{algorithmic}[1]
\Statex \textbf{Input}: relations $R_0$, $R_1$; join attribute $A_{\mathsf{j}}$; we consider synopses (histograms) of $A_{\mathsf{j}}$ are partitioned into bins $B_1, ...B_m$.
\If{$\mathsf{MXReady}(R_0, R_1) == \mathsf{True}$  }~ $\mathsf{BucketJoin}(R_0, R_1, A_{\mathsf{j}})$
\ElsIf{$R_0$, $R_1$ are either base or pre-filtered relation}
\For{$b\in\{0,1\}$}
\State derive $\widehat{\mathtt{mf}}(A_{\mathsf{j}}, R_b)$ from $\synop_b$ (\S~\ref{sec:prim})
\State build index $\pidx(A_{\mathsf{j}}, R_b) = \{\mathsf{idx}_i\}_{i=1,..,m}$ (\S~\ref{subsec:dpindex})
\EndFor
\If{$\forall b~$, $\widehat{\mathtt{mf}}(A_{\mathsf{j}}, R_b)$, and $\pidx(A_{\mathsf{j}}, R_b) \neq \mathsf{null}$ }
\State oblivious sort $R_0$, $R_1$ on $A_{\mathsf{j}}$, {$\mathsf{BucketJoin}(R_0, R_1, A_{\mathsf{j}})$}
\Else ~{\bf assert} ``not applicable for \code{(MX)JOIN}''
\EndIf
\EndIf
\Statex \underline{$\mathsf{BucketJoin}(R_0, R_1, A_{\mathsf{j}})$}:
\For{$i = 1,2,...,m$}
\State \smash{$R^{(i)}_{0,1}\gets \sigma_{A_j \in B_i}(R_{0,1})$ using \code{(DC)SELECT}/ $\pidx(A_{\mathsf{j}}, R_{0,1})$}
\State compute  \smash{$O_i\gets (R^{(i)}_0 \Join_{A_{\mathsf{j}}} R^{(i)}_1)$} via standard \code{JOIN}
\State $cs_i \gets \min\left(\frac{|R^{(i)}_0|}{\widehat{\mathtt{mf}}(A_{\mathsf{j}}, R_0)}, \frac{|R^{(i)}_1|}{\widehat{\mathtt{mf}}(A_{\mathsf{j}}, R_1)}\right)\times {\widehat{\mathtt{mf}}(A_{\mathsf{j}}, R_0)}\cdot{\widehat{\mathtt{mf}}(A_{\mathsf{j}}, R_1)}$
\State $R_{\mathsf{out}}\gets R_{\mathsf{out}} \cup \code{OPAC}(O_i, cs_i)$
\EndFor
\State {\bf return} $R_{\mathsf{out}}$
\end{algorithmic}
\label{algo:mxjoin}
\end{algorithm} 
In general, \code{(MX)JOIN} can be applied to two types of data: the base and pre-filtered relations where the join key attribute is included in $\synop$. Specifically, \code{(MX)JOIN} starts with computing the join key MFs (Alg~\ref{algo:mxjoin}:4) and constructing private indexes (Alg~\ref{algo:mxjoin}:5) for both inputs. All these operations are conducted through ``privacy cost-free'' transformations using available DP synopses. Once these objects are obtained, the algorithm employs oblivious sort to rearrange both inputs (Alg~\ref{algo:mxjoin}:6,7), rendering them indexable with tuples logically distributed into independent buckets by join key values. Next, \code{(MX)JOIN} simply adopts standard \code{JOIN} to join tuples exclusively within the same buckets (Alg~\ref{algo:mxjoin}:10). Finally, \code{(MX)JOIN} performs per-bucket output compaction, where it first determines the \emph{MF join bound}~\cite{hertzschuch2021simplicity} for each bucket join and invokes \code{OPAC} to compact the output according to the learned size (Alg~\ref{algo:mxjoin}:11,12). As bucket-wise operations are independent, the aforementioned steps lend themselves well to parallelized processing. As \code{(MX)JOIN} derives join compaction sizes completely from post-processing of DP synopses, it thus incurs no extra privacy loss. Additionally, the noisy MF bounds guarantee that compaction sizes are consistently overestimated, ensuring lossless compaction of join results.

\eat{
For better illustration, we visualize \code{(MX)JOIN}'s processing flow in Figure~\ref{fig:mxjoin}.

\begin{figure}[h]
\centering
\includegraphics[width=0.7\linewidth]{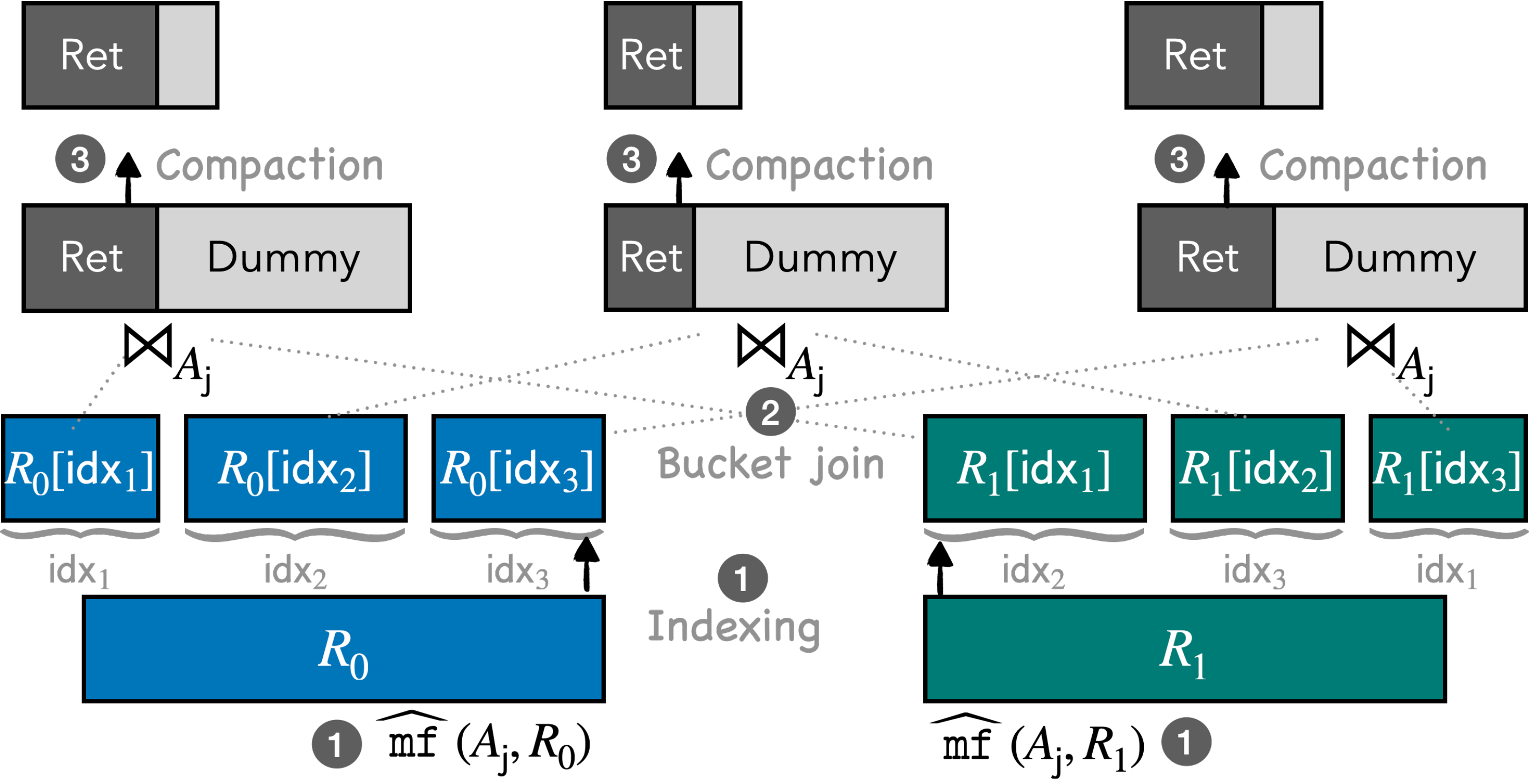}
\caption{Example processing flow of \code{(MX)JOIN}.}
\label{fig:mxjoin}
\end{figure}
}



\eat{

\boldparagraph{\code{(IDX)JOIN}.} The second compacted join is called the indexed join which leverages the join key marginals. Specifically, given $R_0 \Join_{\theta} R_1$ and assuming the join key marginals, $\mathbf{\tilde{h}}_0$, $\mathbf{\tilde{h}}_1$ are available for both inputs. The operator first sorts the secure arrays, \( \langle R_0\rangle \) and \( \langle R_1\rangle \), based on the join key \( \theta \). It then constructs DP indexes for these inputs using \( \mathbf{\tilde{h}}_0 \) and \( \mathbf{\tilde{h}}_1 \). With this approach, \code{(IDX)JOIN} can employ DP indexes to distribute tuples into (overlapping) buckets based on their indexed domain values, then join only those within the same buckets. For example, the operator can employ a standard \code{JOIN} between each bucket \( \langle R_0[\mathsf{idx}^0_i]\rangle \) and \( \langle R_0[\mathsf{idx}^1_i]\rangle \) for every \( i \in [1,n] \). Here, \( \mathsf{idx}^b_i \) represents the \( i^{th} \) indexed join key range for input \( R_b \). In this way, the join complexity has become $O(\mathbf{\tilde{h}}_0 \cdot \mathbf{\tilde{h}}_1 + n\log^2n)$, and the output size is compacted to $\mathbf{\tilde{h}}_0 \cdot \mathbf{\tilde{h}}_1$ from $n^2$. 

\boldparagraph{\code{(MX)JOIN}.} We explore a mixed join using both marginal and MF statistics. Typically, the operator initiates an \code{(IDX)JOIN} and then uses MFs to further compact the output. While one could simply set up a \code{OPAC} after \code{(IDX)JOIN} to adjust the output to $\card(R_0 \Join_{\theta} R_1, \mathsf{MF})$, a more efficient approach is to apply compactions to each individual bucket join within \code{(IDX)JOIN}. For each bucket join between \smash{$\langle R_0[\mathsf{idx}^0_i]$} and \smash{$R_1[\mathsf{idx}^1_i] \rangle$}, with an output size $w_i = \mathbf{\tilde{h}}_0[i]\cdot \mathbf{\tilde{h}}_1[i]$, the operator employs an independent \code{OPAC} to resize the output to $\widehat{w_i}=\min(\mathbf{\tilde{h}}_0[i]\cdot{\mathtt{MF}_1}, \mathbf{\tilde{h}}_1[i]\cdot{\mathtt{MF}0})$ only if $\widehat{w_i} < w_i$. By applying \code{OPAC} to individual bucket join outputs rather than the full join output, we reduce the compaction overhead to $O(\sum_{i}w_i\log w_i)$, compared to the larger $O((\sum_i w_i) \log (\sum_i w_i))$. Moreover, the total output size is reduced to $\sum_i \min(w_i,\widehat{w_i})$.

}

\eat{
\subsection{\sys Cardinality Estimation \card }\label{subsec:pcard}
We present \card, a novel SCA primitive for the private estimation of intermediate query sizes. In general, \card accepts as input a set of relations (either base or intermediate relations) \(\mathbf{R}=\{R_1, R_2, \ldots\}\), and a logical operator $\mathsf{op}$ over $\mathbf{R}$. It outputs a estimation for $|\mathsf{op}(\mathbf{{R}})|$ based on available statistics. Moreover, \card is designed to meet critical requirements: (i) it operates without the need for any secure computations; (ii) it guarantees no additional privacy cost; and (iii) it never underestimated the actual cardinalities. These key features enable us to establish tight and lossless truncation bounds for oblivious query operations and assist in accurate query planning, all while preserving provable privacy. To our knowledge, \card is the first primitive of its kind to be supported by SCA systems. 

\vspace{-.5mm}
\boldparagraph{Default estimation.} By default, \card will estimate the output sizes with the volume hiding upper bound~\cite{liagouris2023secrecy} if no synopses are available. Specifically, for inputs $\mathbf{{R}}=\{{R}_1, ..., {R}_k\}$, $\card(\mathsf{op}(\mathbf{R})) $ outputs $n^{k}$ as the default estimation for any operators.

\vspace{-.5mm}
\boldparagraph{Preceding filter sizes.} We say that $\sigma_{A \in vals}(D)$ is a preceding filter if it selects tuples from a base relation. To estimate its size, \card first examines the synopsis of $D$ to obtain $\mathbf{h}^{+}(A, D)$ (upper histogram of $A$). This can be achieved by directly using \(\mathbf{H}(A, D)\) if it exists, or by computing from (\ref{sec:synop}) any \(\mathbf{H}(\mathsf{pair}_k, D) \in \synop\) such that \(A \in \mathsf{pair}_k\). Next, \card marginal sums the bins in $\mathbf{h}^{+}(A, D)$ that contains $vals$, with the result as the estimation of $|\sigma_{A \in vals}(D)|$. If $\mathbf{h}^{+}(A, D)$ is unobtainable, i.e. $A$ is not captured by the synopsis, then \card proceeds to a default estimation. Moreover, we also support conjunctive filter predicates, such as $\sigma_{\mathbf{A}=\{A_1, A_2,...\} \subseteq \mathbf{vals}}(D)$, where \card estimate its size as $\min_{A\in \mathbf{A}}\{\card(\sigma_{A \in vals}(D))\}$.


\boldparagraph{Join cardinalities}. We employ two upper-bound techniques to pessimistically estimate join cardinalities: the index based (IDX bound) and the max frequency based join bound (MF bound)~\cite{hertzschuch2021simplicity}. Additionally, we distinguish between two types of joins: non-expanding (those over unique-valued attributes) and expanding joins. As all schemas are shared and unique-valued attributes are labeled, non-expanding joins can be identified without privacy loss.\\

\vspace{-.5mm}
\noindent\underline{\it IDX bound.} For binary joins $R_0 \Join_{\theta} R_1$, \card can establish a tighter bound when join key (noisy) marginals (1D histograms), $\mathbf{\tilde{h}}_0, \mathbf{\tilde{h}}_1$, are available for both inputs. This can be the case where $R_0$ and $R_1$ are either base relations or the results from preceding filters, and the join key attribute is contained by the synopses. In this way, $\card$ outputs $\mathbf{\tilde{h}}_0 \cdot \mathbf{\tilde{h}}_1$ (inner product) for expanding joins, and $\sum_i \min_{b\in\{0,1\}}(\mathbf{\tilde{h}}_b[i])$ for non-expanding joins. The upper bound arises from the fact that tuples in $R_0$ and $R_1$ can only join when their respective join keys align within the same histogram bin. Since both $\mathbf{\tilde{h}}_0$ and $\mathbf{\tilde{h}}_1$ consistently overestimate bin counts, and thus $\mathbf{\tilde{h}}_0 \cdot \mathbf{\tilde{h}}_1$ and  $\sum_i \min_{b\in\{0,1\}}(\mathbf{\tilde{h}}_b[i])$ are trivial upper bounds for expanding and non-expanding joins, respectively.\\

\vspace{-2mm}
\noindent\underline{\it MF bound.} \card can derive an MF-based join upper bound~\cite{hertzschuch2021simplicity} if join key MFs are available for all join inputs. Specifically, let $\mathtt{MF}_0$, $\mathtt{MF}_1$ to be the join key MFs for $R_0$ and $R_1$, respectively. Then $\card({R}_0 \Join_{\theta} {R}_1) = \min(\frac{|{R}_0|}{\mathtt{MF}_0}, \frac{|{R}_1|}{\mathtt{MF}_1})\times {\mathtt{MF}_0}{\mathtt{MF}_1}$. As per~\cite{hertzschuch2021simplicity}, the MF bound can also be extended to multi-way joins. For instance, given $(R_0 \Join_a R_1) \Join_b R_2$, the join upper bound can be computed as 
$\min(\frac{\mathsf{upper}(R_0 \Join_a R_1)}{\mathtt{MF}_0^a{\mathtt{MF}_1^b}}, \frac{R_2}{{\mathtt{MF}_2^b}})\times\mathtt{MF}_0^a\mathtt{MF}_1^b\mathtt{MF}_2^b$, where the superscripts on $\mathtt{MF}$ indicate the join attribute from which the MF is derived. 

}

\section{\sys Planner}\label{sec:planner}
Current DPSCA designs struggle with costly execution plans because they cannot pre-estimate query intermediate sizes, and thus unable to identify effective execution plans with minimized cost. \sys overcomes this challenge by introducing a novel query planner that uses synopses for size estimation, and thus enabling both private and efficient SCA query planning. 

At a high level, our planner is modeled after the Selinger-style optimizer~\cite{blasgen1981system}. It uses a bottom-up, dynamic programming approach to enumerate all equivalent plans for a given query, estimates their costs (heavily influenced by intermediate sizes) using available synopses, and selects the plan with minimal cost. The introduction of \sys primitives significantly impacts cost modeling for oblivious operations, rendering existing models~\cite{bater2018shrinkwrap, liagouris2023secrecy} inadequate. Furthermore, the design-space challenge of query planning persists, and the extensive equivalent plan search space necessitates strategies to simplify the search process. To address these, we first systematically analyze the complexities of \sys primitives and develop a new cost model (\S~\ref{subsec:cm}). We then design protocol-specific heuristics (\S~\ref{subsec:huristic}) tailored to our planner to narrow the search space.

\subsection{Cost Model} \label{subsec:cm}
\re{We adopt the standard SCA cost framework~\cite{bater2018shrinkwrap} to develop \sys's cost model, viewing the cost of a secure execution plan as the sum of each operator's I/O and secure evaluation costs.} Specifically, given a plan with $\ell$ operators, $\mathsf{op}_1,...,\mathsf{op}_{\ell}$, and let $\mathbf{I}=\{{I}_1,..., {I}_{\ell}\}$, $\mathbf{O}=\{{O}_1,..., {O}_{\ell}\}$, to be the input and output sizes of each operators. The plan cost is:
\begin{equation}
\mathsf{Cost} = \textstyle \sum_{i=1}^{\ell} C^{\mathsf{op}_i}_{\mathsf{in}}(I_i) + C_{\mathsf{eval}}^{\mathsf{op}_i}(I_i) + C^{\mathsf{op}_i}_{\mathsf{out}}(O_i)
\end{equation}
Here, \(C_{\mathsf{in}}\) represents the data access cost (input I/O), primarily capturing the expenses when moving data from persistent storage to an in-memory secure array. 
\(C_{\mathsf{out}}\) denotes the output I/O cost, modeling the expenses when writing operator results into output arrays.  
 \(C_{\mathsf{eval}}\) accounts for the secure computing cost for evaluating an operator, typically constituting the dominant cost. Note that, in practice, the exact formulas for \(C_{\mathsf{in}}, C_{\mathsf{out}}\), and \(C_{\mathsf{eval}}\) can vary depending on the specific secure protocol employed (garbled circuits~\cite{yao1986generate}, secret sharing~\cite{micali1987play}, etc.) as well as the particular hardware configurations in use. Nonetheless, the understanding of the asymptotic costs is adequate for comprehending the principles of SCA query planning and optimization strategies~\cite{liagouris2023secrecy, bater2018shrinkwrap}. In what follows, we provide detailed analysis on the asymptotic costs for each \sys operator. Similarly, we assume that all input data sizes mentioned henceforth in this section are of size \(n\), and all 1D histograms have \(m\) bins.  Table~\ref{tab:asymptotic-costs} summarizes the operator costs.

\vspace{1mm}
\begin{table}[]
\centering
\caption{Asymptotic costs for secure operators}\vspace{-3mm}
\label{tab:asymptotic-costs}
\scalebox{0.77}{
\begin{tabular}{|c c c c|}
\hline
\textbf{Operator} & \textbf{Input I/O ($C_{\mathsf{in}}$)} & \textbf{Eval. ($C_{\mathsf{eval}}$)} & \textbf{Output I/O ($C_{\mathsf{out}}$)} \\
\hline
\code{PROJECT} & $O(n)$ & $\text{N/A}$ & $O(n)$ \\
Agg. & $O(n)$ & $O(n)$ & $O(1)$ \\
Group \& Order & $O(n)$ & $O(n\log^2n)$ & $O(n)$ \\
\hline
\code{SELECT} & $O(n)$ & $O(n)$ & $O(n)$ \\
\code{(OP)SELECT} & $O(n)$ & $O(n \log n)$ & $\mathsf{hist\_bound}$ \\
\code{(SP)SELECT} & $O(n)$ & $O(n)$ & $O(1)$ \\
\code{(DC)SELECT} & $\mathsf{idx\_bound}$ & N/A & N/A \\
\hline
\code{JOIN} & $O(n)$ & $O(n^2)$ & $O(n^2)$ \\
\code{(MX)JOIN} & $O(n)$ & $O(n^2)^*$ & $\mathsf{mf\_bound}$ \\
\hline 
\end{tabular}}
\noindent{$^*$\footnotesize{Assuming the max size of the indexed buckets is bounded by $O(\frac{n}{\log n})$.}}
\vspace{-1em}
\end{table}
\vspace{2pt}\noindent{\bf Oblivious sorting and compaction.} While oblivious sorting algorithms with optimal $O(n\log n)$ complexity exist, they often necessitate either impractically large constants~\cite{ajtai19830, goodrich2014zig} or  client-side memory~\cite{asharov2020bucket}, both do not fit with SCA scenario. Consequently, we will consider the well-established bitonic sorting based implementation for oblivious sort, which come with  $O(n\log^2 n)$ complexity. Nonetheless, efficient \code{OPAC} implementations with $O(n\log n)$ complexity remain achievable~\cite{sasy2022fast}.

\vspace{2pt}\noindent{\bf Projection, grouping and aggregation.} The \code{PROJECT} accesses private relations and discards unnecessary columns independently on each server, which is naturally oblivious. Thus, I/O costs dominate this operation, with both input and output costs bounded by \(O(n)\). The costs of \code{ORDER-BY}, \code{DISTINCT}, and \code{GROUP-BY} are dominated by oblivious sorting, resulting in a complexity of \(O(n\log^2n)\). Additionally, as these operators do not reduce output sizes, both \(C_{\mathsf{in}}\) and \(C_{\mathsf{out}}\) are bounded by \(O(n)\). Finally, the cost of aggregations, i.e. \code{COUNT}, \code{SUM}, and \code{MIN/MAX} subjects to a oblivious linear scan, typically outputting a single secret-shared value. Hence, its \(C_{\mathsf{eval}}\) is bounded by \(O(n)\), with $C_{\mathsf{in}}$ at \(O(n)\) and $C_{\mathsf{out}}$ at \(O(1)\).

\vspace{2pt}\noindent{\bf Selections.} The primary cost of \code{SELECT} stems from an oblivious linear scan, making \(C_{\mathsf{eval}}\) within \(O(n)\). Since \code{SELECT} does not shrink the output size, both \(C_{\mathsf{in}}\) and \(C_{\mathsf{out}}\) are within \(O(n)\). \code{(OP)SELECT} requires an oblivious compaction (\code{OPAC}) before writing outputs, where \code{OPAC} usually yields an \(O(n\log n)\) complexity~\cite{sasy2022fast}. Consequently, its \(C_{\mathsf{eval}}\) is bounded by \(O(n\log n)\) with input I/O cost same as \code{SELECT}. However, as \code{(OP)SELECT} compacts output size, \(C_{\mathsf{out}}\) is reduced to \(\mathsf{hist\_bound} = O\left(\sum_{B_i \cap vals \neq \emptyset} |B_i|\right)\), where \(B_i \cap vals \neq \emptyset\) are bins in the synopsis histogram intersecting with selection conditions. If \(\sum_{B_i \cap vals \neq \emptyset} |B_i| \sim O(1)\), \code{(SP)SELECT} becomes preferable, with its running cost dominated by a two-phase linear scan, and thus  \(C_{\mathsf{eval}}\) is now \(O(n)\), and the output cost is \(O(1)\). \code{(DC)SELECT} is the most efficient selection, though it requires indexable input data. All costs are directly related to the size of the indexed data, so \(C_{\mathsf{in}}\), \(C_{\mathsf{evals}}\) and \(C_{\mathsf{out}}\) are all bounded by \(\mathsf{idx\_bound} = O\left(\max_i(\mathsf{idx}_i.\mathsf{hi}) - \min_i(\mathsf{idx}_i.\mathsf{lo})\right)\). Here, $\mathsf{idx}_i$ are indexed regions that intersect with selection conditions.


\vspace{3pt}\noindent{\bf Joins.} Both \code{JOIN} and \code{(MX)JOIN} have \(O(n)\) data access costs, but differ in \(C_{\mathsf{eval}}\) and \(C_{\mathsf{out}}\). \code{JOIN}, conducting a Cartesian product for two input tables, has \(C_{\mathsf{eval}}\) and \(C_{\mathsf{out}}\) both bounded by \(O(n^2)\).
Compared to \code{JOIN}, in the worst-case scenario where the join keys follow a highly biased distribution, i.e. max bucket size reaches \(O(n)\), \code{(MX)JOIN}'s asymptotic cost is at most \(O(n^2\log n)\). However, when join keys are distributed more uniformly, the cost can be asymptotically better. For instance, with \(m = \log n\) and assuming a max bucket size of \smash{\(O(\frac{n}{\log n})\)}, each bucket join costs \smash{\(O(\frac{n^2}{\log n})\)}, leading to a total cost of \(O(n^2)\), equivalent to \code{JOIN}. Recall that bucket joins in \code{(MX)JOIN} can be executed concurrently, hence, the processing latency is indeed dominated by the bucket-wise cost, i.e. \smash{\(O(\frac{n^2}{\log n})\)}. Additionally, the output cost is lowered from \(O(n^2)\) to the sum of per-bucket MF upper bounds (Alg~\ref{algo:mxjoin}:11), which can be substantially less if the join key MFs are low.

\subsection{Heuristics}\label{subsec:huristic}
\noindent{\bf H-1. Filter push down.} is a common query planning optimization, moves selection operations to the earliest possible stage to reduce data processed by subsequent operations. In conventional SCAs, data obliviousness often requires padding selection sizes, making filter pushdown ineffective~\cite{liagouris2023secrecy,eskandarian2017oblidb, zheng2017opaque, poddar2021senate}. However, \sys's innovative selection methods enable compacting selections to approximate true cardinalities without compromising privacy, restoring the effectiveness of filter pushdown. Hence, we include filter push down as one of the optimization heuristic for \planner.


\vspace{2pt}\noindent{\bf H-2. Predicates fusion.} Let $R$ to be any relation, $A_1, A_2, ..., A_k \subseteq attr(R)$, and $\mathbf{v} = \{v_1, v_2,...,v_k\}$. We say that for multiple selection over $R$ such that $\sigma_{A_1 \in v_1} (...\sigma_{A_k\in v_k}(R))$, one can always fuse them into one selection $\sigma_{\mathbf{A}\subset \mathbf{v}}(R)$. This can reduce the number of secure computation invocations from $k$ rounds to just one. Additionally, the selection size can be estimated as $\min_{A_i}\left(\ce(\sigma_{A_i\in v_i}(R))\right)$.  

\vspace{2pt}\noindent{\bf H-3. Join statistics propagation.} A key property of \sys join, \code{(MX)JOIN}, is that the output is already indexed and bucketized by the join key. Therefore, for any output $R$ of \code{(MX)JOIN} computing $R_0 \Join_{A_{\mathsf{j}}} R_1$, a new index $\pidx(A_{\mathsf{j}}, R)$ across $R$ can be easily derived. Moreover, as per~\cite{hertzschuch2021simplicity},  one can also update the MF for $R$ by computing $\widehat{\mathtt{mf}}(A_{\mathsf{j}}, R) = \widehat{\mathtt{mf}}(A_{\mathsf{j}}, R_0) \times \widehat{\mathtt{mf}}(A_{\mathsf{j}}, R_1)$. As a result, we say that the output of \code{(MX)JOIN} as \emph{MF-and-index-ready}, enabling direct application of another \code{(MX)JOIN} on the same join attribute.

\section{Evaluation}\label{sec:eva}
In this section, we present evaluation results of our proposed framework. Specifically, we address the following questions: \textbf{Question-1}: Does \sys offer efficiency and privacy advantages over existing SCAs?  \textbf{Question-2}: For \sys design, is there a trade-off between privacy guarantees and system efficiency? \re{How different synopses scenarios affect system performance?} \textbf{Question-3}: Can \sys scale complex analytical (e.g. multi-way join) queries to large-scale (multi-million rows) datasets?

\subsection{Experimental setups} \label{sec:setup}
\noindent{\bf Baseline systems and \sys prototype.} We compare \sys with two baseline systems: Shrinkwrap~\cite{bater2018shrinkwrap}, the SOTA DPSCA, and SMCQL~\cite{bater2017smcql} (also used as a baseline for Shrinkwrap). For consistency, we consider the same circuit-model implementations for both baseline systems and the \sys prototype. While some works~\cite{poddar2021senate,liagouris2023secrecy} similar to SMCQL use exhaustive padding but improve efficiency through protocol-level optimizations, we exclude them from our benchmarks for fair comparison concerns. 
We re-implemented key query features for the two baseline systems and built \sys using the same MPC package ( \code{EMPtoolkit-0.2.5}). 
All implementations are open-sourced~\cite{special2024}.

\vspace{4pt}\noindent{\bf Datasets and workloads.} 
\re{ We developed two open benchmarks. The first reproduces the HealthLNK benchmark used by Shrinkwrap and SMCQL which simulates a real-world scenario where medical researchers want to perform secure analytics across multiple cohorts' sensitive data. 
We use an open schema~\cite{smcql-git} to generate a scalable synthetic dataset with 4 tables, 222K rows and 24 columns. The benchmark involves four multi-cohort medical study queries, identical to those used in Shrinkwrap and SMCQL. 
Our second benchmark simulates secure collaborative analytics within the financial sector. Imagine multiple banks needing to analyze their combined, private data to study loan and financial statistics—all without compromising sensitive customer information.  We use the anonymized {\em Czech Financial Dataset}~\cite{RelationalDataFinancial} for this, assuming each entry represents data owned by a different bank or financial organization unable to directly share information.} This dataset comprises 8 relational tables with a total of 55 columns and 1.1 million rows. For testing workloads, we design eight query workloads, ranging from simple linear queries to complex multi-way join-aggregation queries. A brief summary of the workloads is provided in Table~\ref{tab:workloads}.



\vspace{4pt}\noindent{\bf Default configurations.} For \sys, we employ a \re{per-table privacy budget allocation strategy. Each table is allocated a one-time privacy budget of $\epsilon=1.5$ and $\delta=0.00005$ for synopses generation, and is evenly distributed across all DP synopses.} For Shrinkwrap, we adopt their default \re{per-query privacy allocation, assigning a privacy budget of $\epsilon=1.5$ and $\delta=0.00005$ to each query, as outlined in~\cite{bater2018shrinkwrap}. It is important to note that this configuration means Shrinkwrap will not offer guarantees on the bounded privacy loss across multiple queries.} For the HealthLNK benchmark, we use {\em Dosage} and {\em Aspirin} as representative workloads, and for the Financial benchmark, we use {\em FQ2, FQ4, and FQ8.} 
Unless further elaborated, these workloads will also serve as default testing queries for our evaluation. For all equal-width histograms generated in \sys, we configure them to have at most 8 bins. Moreover, for baseline systems, as they do not have join ordering optimizations, thus we will assume a random join order for them. We conduct all experiments on bare-metal Mac machines with M2 Max CPUs and 96GB unified memory.  

\begin{table}[]
\caption{\re{Query workloads}}
\label{tab:workloads}
\scalebox{0.7}{
\begin{tabular}{|c|c|c|l|}
\hline
\textbf{Bench.}                     & \textbf{Query}          & \textbf{Type}                    & \multicolumn{1}{c|}{\textbf{Description}} \\ \hline
\multirow{4}{*}{\textbf{HealthLNK}} & \textbf{Dosage Study}   & \multicolumn{1}{l|}{Binary Join} & Expanding binary join.                    \\
                                    & \textbf{Comorbidity}    & \multicolumn{1}{l|}{Binary Join} & Non-expanding binary join.                \\
                                    & \textbf{Aspirin count}  & \multicolumn{1}{l|}{Multi Join}  & 3 way mixed join                          \\
                                    & \textbf{3 Join Aspirin} & \multicolumn{1}{l|}{Multi Join}  & 4 way mixed join                          \\ \hline
\multirow{8}{*}{\textbf{Financial}} & \textbf{FQ1}            & Linear                           & Point query.                              \\
                                    & \textbf{FQ2}            & Linear                           & Range query.                              \\
                                    & \textbf{FQ3}            & Binary Join                      & Non-expanding binary join.                \\
                                    & \textbf{FQ4}            & Binary Join                      & Expanding binary join.                    \\
                                    & \textbf{FQ5}            & Multi Join                       & 3-way mixed joins.                        \\
                                    & \textbf{FQ6}            & Multi Join                       & 3-way all expanding joins.                \\
                                    & \textbf{FQ7}            & Multi Join                       & 4-way mixed joins.                        \\
                                    & \textbf{FQ8}            & Multi Join                       & 5-way mixed joins.                        \\ \hline
\end{tabular}}
\end{table}

\begin{figure*}[ht]
\centering
\includegraphics[width=0.92\linewidth,interpolate=false]{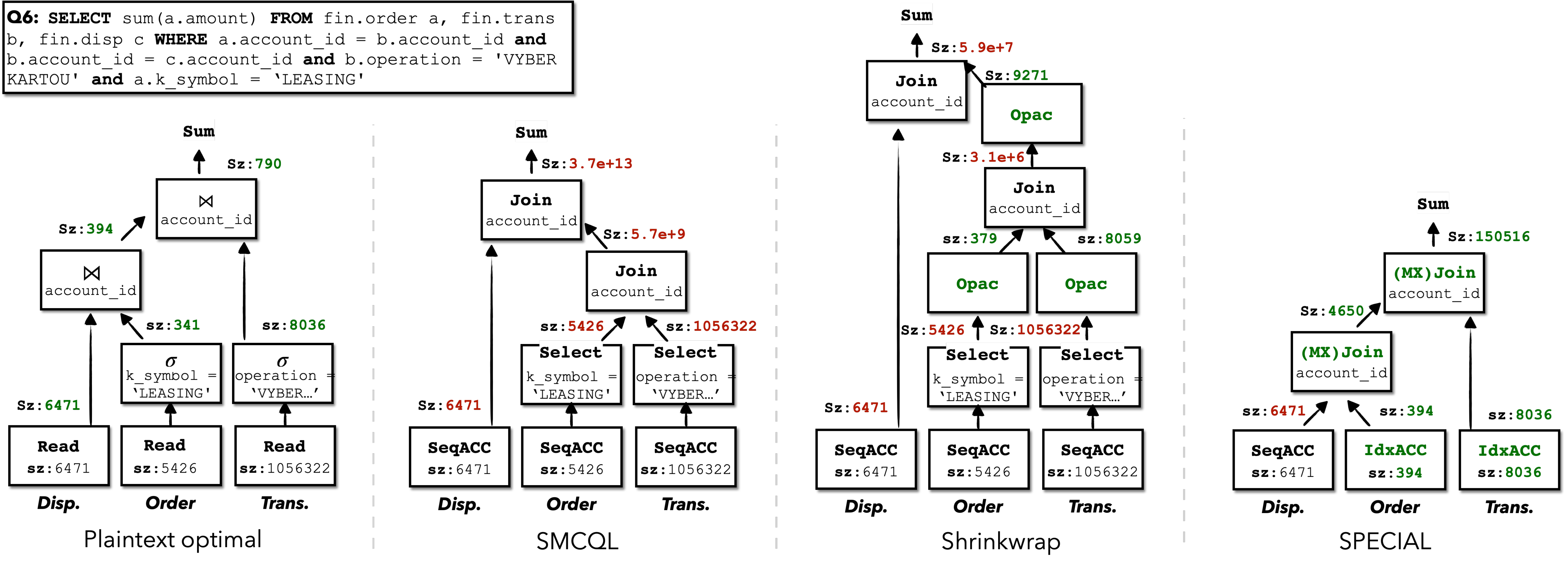}
\caption{In-depth comparisons in execution plans: (i) The exhaustive padding in SMCQL can lead to significant memory blowup; (ii) Both SMCQL and Shrinkwrap suffer from \re{unoptimized join ordering}; (iii) Although Shrinkwrap reduces intermediate sizes, it still requires substantial memory to materialize join outputs; (iv) \sys can identify efficient join execution orders to reduce intermediate sizes; (v) \code{IdxAcc} can significantly reduce input I/O costs.}
\label{fig:plan_cmp}
\end{figure*}

\subsection{End-to-end comparisons}\label{sec:ete}
To address {\bf Question-1}, we first conduct an end-to-end performance comparison of \sys, Shrinkwrap, and SMCQL across all benchmark workloads. The results are summarized in Figure~\ref{fig:all:time},~\ref{fig:all:mem}. We cannot complete full benchmark for SMCQL due to high memory cost, so we project evaluations for FQ4 (using 10\% data) and omit results for other complex workloads.
\begin{figure}[H]
\includegraphics[width=0.45\textwidth,interpolate=false]{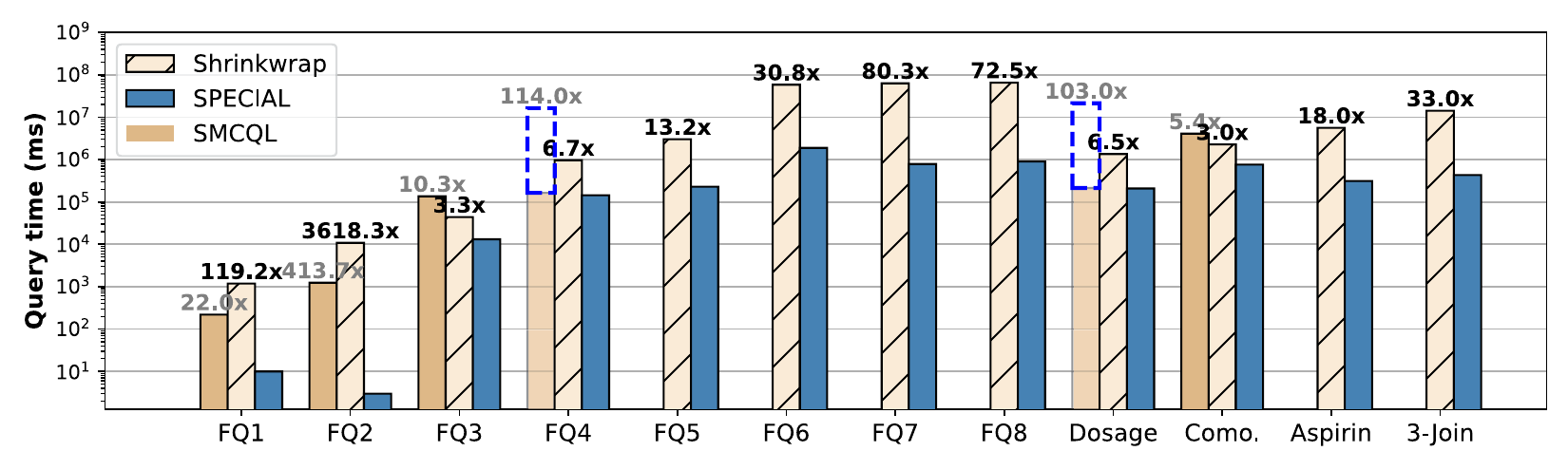}
\vspace{-1em}
\caption{\re{End-to-end comparison: query latency}}
\label{fig:all:time}
\end{figure}
\vspace{-2em}
\begin{figure}[H]
\includegraphics[width=0.45\textwidth,interpolate=false]{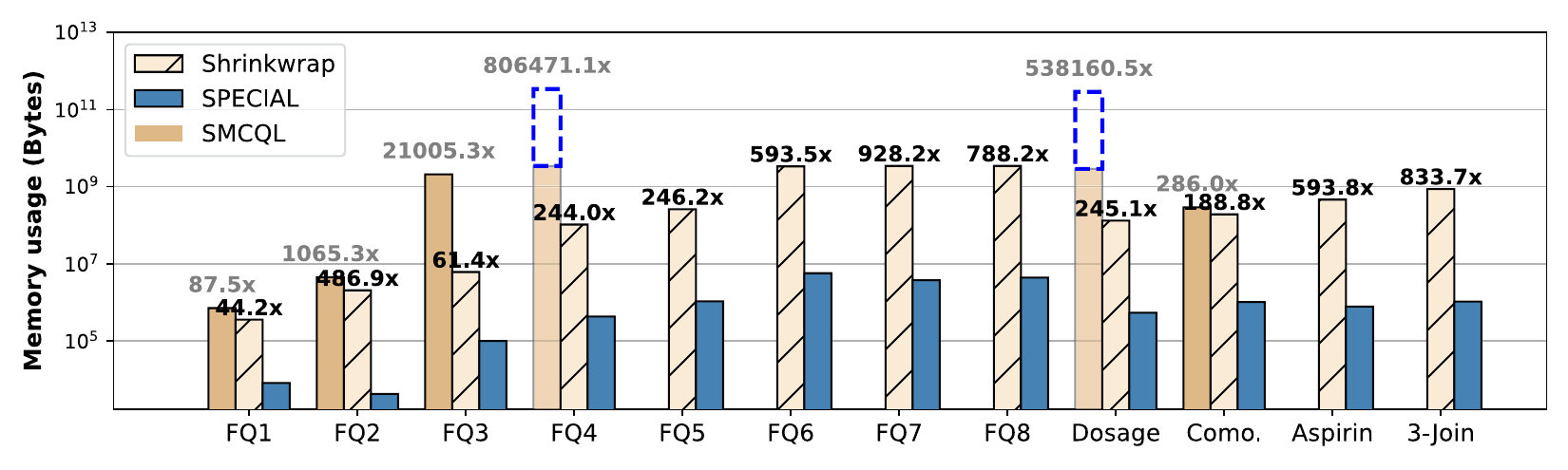}
\vspace{-1em}
\caption{\re{End-to-end comparison: memory usage}}
\label{fig:all:mem}
\end{figure}

\noindent{\bf Observation 1. \re{\sys outperforms Shrinkwrap and SMCQL in query latency across all benchmarks,} reaching up to $3618.3\times$ for linear queries, $114\times$ for binary joins, and $80.3\times$ for multi-joins.} Figure~\ref{fig:all:time} shows the comparison results in query time. \re{First, \sys shows significant speedups for linear queries, reaching up to $3618.3\times$ (FQ2). This large performance gain is mainly attributed to its index-based fast data access. By directly fetching private data through DP indexes, \sys eliminate substantial I/O costs (e.g., sequential reads) and bypass the need for secure computations (e.g., oblivious filter).} Second, we observe that in binary joins, \sys has a less pronounced advantage over Shrinkwrap. This is because binary joins have a single join order, eliminating the potential for join ordering problems (where different join orders lead to significantly different performance). Consequently, even though Shrinkwrap does not optimize join orders, it doesn't experience efficiency losses in this scenario. However, for more complex multi-way joins, \sys's advantage becomes more pronounced again. For instance, more than $80\times$ speedup in FQ7. This is because \sys can pre-select efficient join orders before runtime. 

\vspace{3pt}\noindent{\bf Observation 2. \sys shows profound improvement in memory usage (Figure~\ref{fig:all:mem}) against baseline systems, especially in complex multi-way joins.} This is primarily due to two factors: First, the \code{(MX)JOIN} used by \sys is more memory-efficient compared to the joins implemented by Shrinkwrap and SMCQL. Second, \sys's capability to identify optimal execution plans significantly reduces total intermediate sizes, which is particularly beneficial for complex joins that suffer from sub-optimal or exhaustive padding in other systems. To better understand the substantial improvements \sys achieves—for instance, up to $928.2\times$ over Shrinkwrap and more than $10^5\times$ over SMCQL—we will zoom into a specific query, FQ6, and compare the detailed execution plans of the three systems. The choice of FQ6 is strategic because its complexity sufficiently highlights the differences in execution plans, yet it remains simple enough for clear visual representation. Our comparison features four execution plans: a plaintext optimal plan, illustrating the ground truth optimal execution; a hypothetical SMCQL plan (with projected  cardinalities); and two actual execution plans from our experiments with Shrinkwrap and \sys. The detailed comparisons and observations are summarized in Figure~\ref{fig:plan_cmp}.


To continue address {\bf Question-1}, we now compare the privacy guarantees of \sys with the baseline systems (Figure~\ref{fig:privacy}). Specifically, we focus on comparing cumulative privacy loss in multiple query answering, w.r.t. two composition models: advanced composition (Adv.)\cite{dwork2014algorithmic} and concentrated composition (CDP.)\cite{bun2016concentrated}.

\begin{figure}[H]
\captionsetup[sub]{font=small,labelfont={bf,sf}}
    \begin{subfigure}[b]{0.5\linewidth}
\centering\includegraphics[width=1\linewidth,interpolate=false]{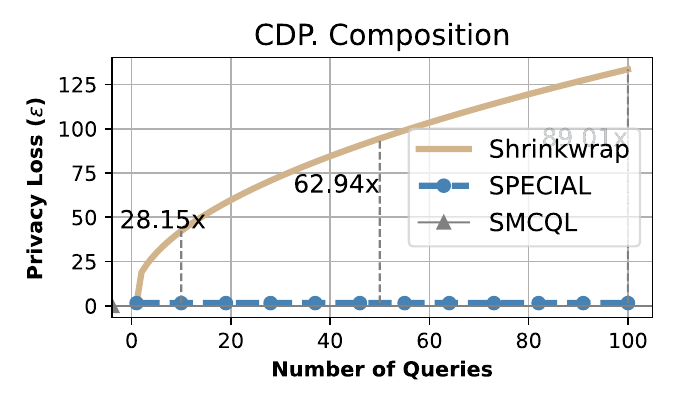}
    \end{subfigure}
    \begin{subfigure}[b]{0.5\linewidth}
\centering\includegraphics[width=1\linewidth,interpolate=false]{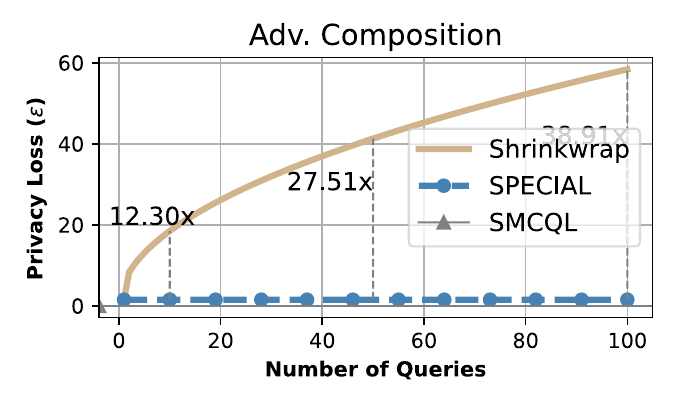}  
    \end{subfigure}
   \caption{\re{End-to-end comparison privacy loss.}}
   \label{fig:privacy}
\end{figure}

\noindent{\bf Observation 3. Under continual query answering, \sys demonstrates significantly lower privacy loss compared to Shrinkwrap, achieving up to $89.01\times$ and $38.91\times$ improvements in the Adv. and CDP modes, respectively.} The privacy loss of \sys is bounded to the initial synopsis release stage, so continual query answering does not incur additional privacy loss. In contrast, Shrinkwrap's privacy loss accumulates over time as each new query allocates a fresh privacy budget. Consequently, its privacy loss exhibits a logarithmic growth, as shown in Figure~\ref{fig:privacy}. This accumulation can result in significant privacy degradation when processing a large number of queries. For example, answering 100 queries in Shrinkwrap could result in a privacy loss of $\epsilon>100$ in Adv. and $\epsilon\approx 60$ in CDP., respecitvely, even if each query only uses a small privacy budget of $\epsilon=1.5$. As such, \sys demonstrate significant improve in privacy guarantees towards SOTA DPSCA. \re{Even when compared to standard SCA (e.g., SMCQL) with no privacy loss due to exhaustive padding, our system incurs only a small and fixed privacy cost (e.g. $\epsilon=1.5$ per table) while delivering substantial performance gains.} 


\subsection{Privacy efficiency tradeoffs}\label{subsec:tradeoff}
We address {\bf Question-2} by evaluating \sys at various privacy levels. Specifically, we maintain $\delta$ constant while varying $\epsilon$ from $0.1$ to $10$ and assess the performance across default testing queries. 
The results are shown in Figure~\ref{fig:eps2}.
\begin{figure}[H]
\includegraphics[width=0.48\textwidth,interpolate=false]{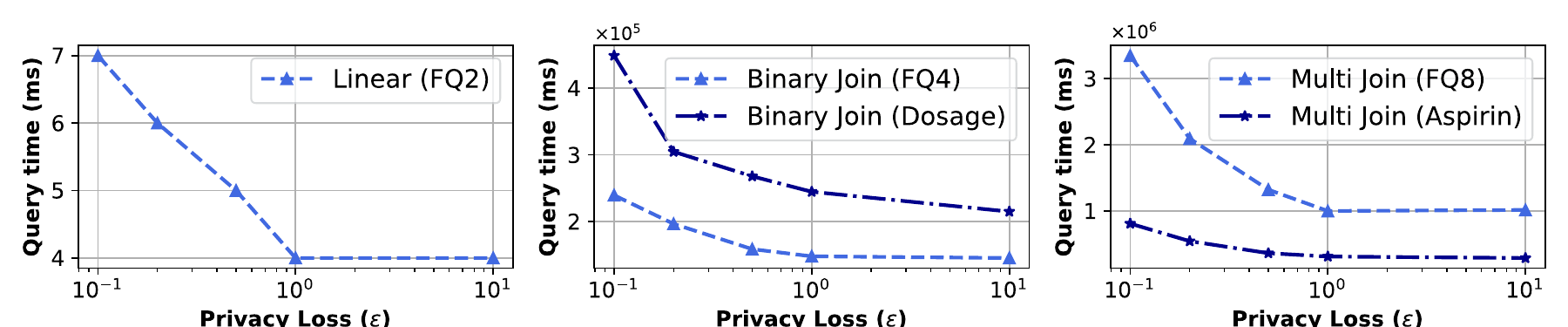}
\vspace{-3em}
\end{figure}
\begin{figure}[H]
\includegraphics[width=0.48\textwidth,interpolate=false]{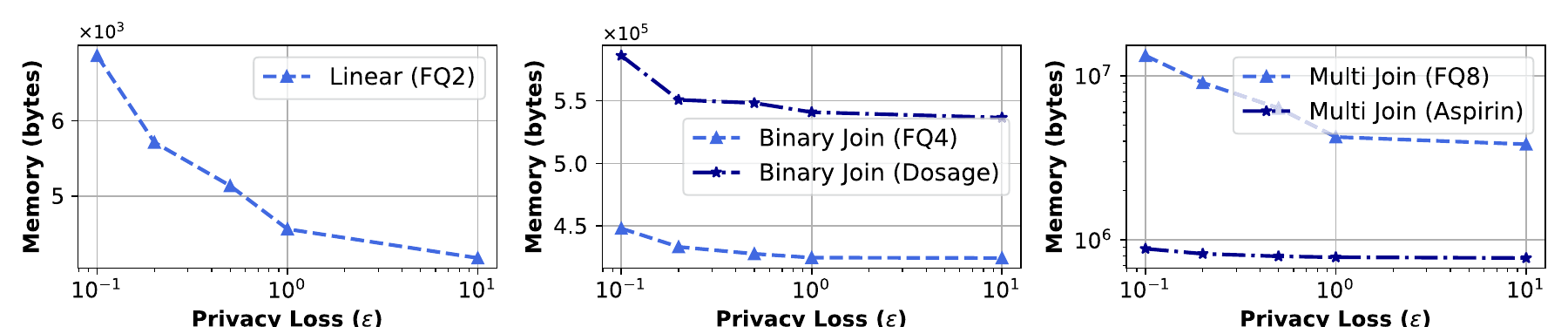}
\caption{\re{Privacy vs. Performance trade-offs}}
\label{fig:eps2}
\end{figure}
\vspace{3pt}\noindent{\bf Observation 4. The privacy-efficiency tradeoff generally exists but exhibits varying trends at different privacy levels. For instance, \sys shows a clear tradeoff at higher privacy levels ($\epsilon < 1$), while at lower privacy levels ($\epsilon > 1$), the tradeoff becomes less pronounced.} When $\epsilon$ increases from $0.1$ to $1$, both memory usage and query latency for all test queries significantly decrease. However, increasing $\epsilon$ from $1$ to $10$ shows no significant performance gains. This may indicate that once $\epsilon$ exceeds $1$, the impact of noises on cardinality estimation or index building is alredy minimal, and further reductions in $\epsilon$ do not lead to notable improvements. Therefore, if high privacy protection is required, practitioners should carefully fine-tune privacy parameters to optimize performance. Conversely, if performance is the priority, setting $\epsilon$ near 1 is typically sufficient.

\re{
\subsection{Synopses impacts micro benchmarks}\label{subsec:micro}
We continue to address {\bf Question-2} to explore how synopses scenarios may affect \sys's performance. Specifically, we study two key settings: (i) How different bin numbers (BinNum) in synopses can impact the efficiency of \code{(IDX)JOIN}, and (ii) how synopsis coverage levels for a single query can affect its overall execution. We will conduct micro-benchmarks for a thorough investigation. Note that simulate different synopses scenario on both HealthLNK and Financial benchmarks can be challenging (e.g., joins typically occur on the same key, so it is hard to simulate partial coverage), hence, to better control experimental variables and accurately assess impacts, we will now use synthetic data and workloads.

\eat{We first study the impact of filter cardinality on selections. To do this, we generate multiple synthetic single-table selection queries on a fixed relation but with varying cardinalities (output sizes). We then test the performance of \sys's four selections for processing these queries. The results are shown in Figure~\ref{fig:sel}.

\begin{figure}[ht]
\captionsetup[sub]{font=small,labelfont={bf,sf}}
    \begin{subfigure}[b]{0.49\linewidth}
    \centering\includegraphics[width=1\linewidth,interpolate=false]{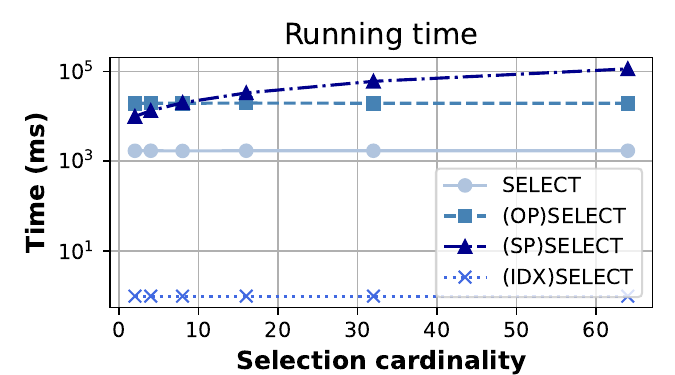}
    \end{subfigure}
    \begin{subfigure}[b]{0.49\linewidth}
    \centering\includegraphics[width=1\linewidth,interpolate=false]{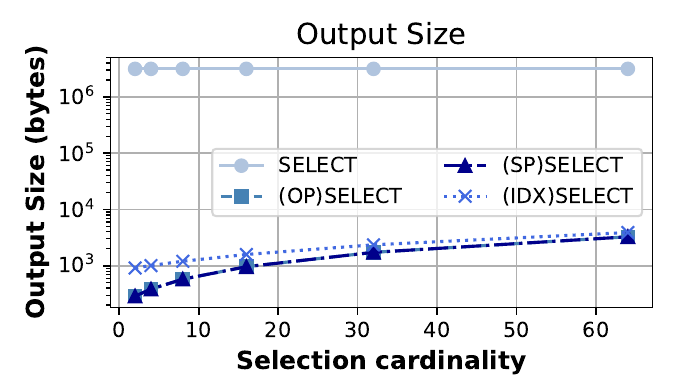}  
    \end{subfigure}
   \caption{\re{Selection experiments.}}
   \label{fig:sel}
\end{figure}

\vspace{3pt}\noindent{\bf Observation 6. Filter cardinality affects the running time of \code{(SP)SELECT} and the output size of all except standard \code{SELECT}.} When filter cardinality is low (e.g., <10), \code{(SP)SELECT} runs notably faster than \code{(OP)SELECT}. This aligns with our asymptotic analysis: \code{(SP)SELECT} has $O(N)$ complexity for small filter cardinalities, while \code{(OP)SELECT} remains at $O(N log^2 N)$.  Although neither \code{(SP)SELECT} nor \code{(OP)SELECT} beats standard \code{SELECT} in speed, they significantly reduce output size, which can be crucial for complex queries processing. \code{(IDX)SELECT} is fastest, as it requires no secure computation at all. Its output is also compact, though slightly larger than \code{(SP)SELECT} and \code{(OP)SELECT} two due to accumulated DP noises in \pidx (more dummy data is then in the output).}

\begin{figure}[ht]
\captionsetup[sub]{font=small,labelfont={bf,sf}}
    \begin{subfigure}[b]{0.49\linewidth}
    \centering\includegraphics[width=1\linewidth,interpolate=false]{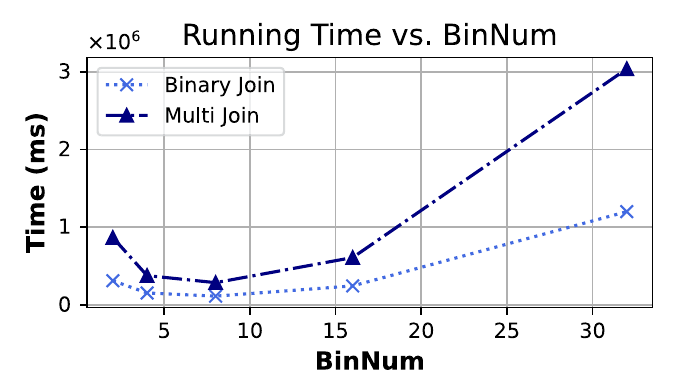}
    \end{subfigure}
    \begin{subfigure}[b]{0.49\linewidth}
    \centering\includegraphics[width=1\linewidth,interpolate=false]{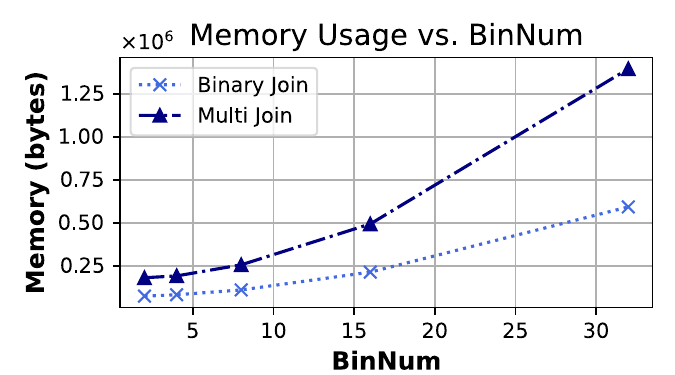}  
    \end{subfigure}
   \caption{\re{BinNum experiments.}}
   \label{fig:binnum}
\end{figure}

We first study how the BinNums impact the performance of \code{(IDX)JOIN}. To study this, we synthesize two join queries on fixed input data, generate join key synopses with varying bin numbers (2 to 64), and measure the performance of \code{(IDX)JOIN} in processing the queries. The results are shown in Figure~\ref{fig:binnum}.

\vspace{3pt}\noindent{\bf Observation 6. The running time of \code{(IDX)JOIN} initially decreases but then increases as the BinNum grows. The memory usage consistently increases.} \code{(IDX)JOIN} partitions larger joins into smaller sub-joins, but since we rely on DP indexes for partitioning, each sub-join inevitably includes additional dummy data. This increased plan size directly translates to higher memory usage and will grow when BinNum increases (more noises in DP indexes). On the other hand, partitioning large joins into smaller ones can enhance join efficiency, which is why we initially observe a decrease in execution time as BinNums increase (e.g., 2 to 8). However, the trade-off arises when the number of bins becomes excessive (e.g., $>8$). The overhead of handling the increased dummy data starts to outweigh the benefits gained from partitioning. At this point, the performance improvement plateaus and starts to degrade as the system struggles with the inflated plan sizes. 

\begin{table}[]
\caption{\re{Synopses coverage experiments.}}
\label{tab:synop}
\scalebox{0.7}{
\begin{tabular}{cccccc}
\cline{2-6}
\multicolumn{1}{c|}{\multirow{2}{*}{\textbf{}}}                                                          & \multicolumn{1}{c|}{\textbf{Synopses Coverage}} & \multicolumn{1}{c|}{\textbf{Time (ms)}} & \multicolumn{1}{c|}{\textbf{Improv.}} & \multicolumn{1}{c|}{\textbf{Mem. (bytes)}} & \multicolumn{1}{c|}{\textbf{Improv.}} \\ \cline{2-6} 
\multicolumn{1}{c|}{}                                                                                    & \multicolumn{1}{c|}{\textbf{No coveragee}}      & \multicolumn{1}{c|}{1831703}            & \multicolumn{1}{c|}{baseline}                 & \multicolumn{1}{c|}{129048576}             & \multicolumn{1}{c|}{baseline}                 \\ \hline
\multicolumn{1}{|c|}{\multirow{3}{*}{\textbf{\begin{tabular}[c]{@{}c@{}}Join Key \\ (JK)\end{tabular}}}} & \multicolumn{1}{c|}{\textbf{2 way (1 JK)}}      & \multicolumn{1}{c|}{276340}             & \multicolumn{1}{c|}{6.6$\times$}               & \multicolumn{1}{c|}{28481424}              & \multicolumn{1}{c|}{4.5$\times$}               \\
\multicolumn{1}{|c|}{}                                                                                   & \multicolumn{1}{c|}{\textbf{3 way (2 JKs)}}     & \multicolumn{1}{c|}{29200}              & \multicolumn{1}{c|}{62.7$\times$}              & \multicolumn{1}{c|}{905808}                & \multicolumn{1}{c|}{142.5$\times$}             \\
\multicolumn{1}{|c|}{}                                                                                   & \multicolumn{1}{c|}{\textbf{All way (3 JKs)}}   & \multicolumn{1}{c|}{6097}               & \multicolumn{1}{c|}{300.4$\times$}             & \multicolumn{1}{c|}{65088}                 & \multicolumn{1}{c|}{1982.7$\times$}            \\ \hline
\multicolumn{1}{|c|}{\multirow{4}{*}{\textbf{Filter$^*$}}}                                                   & \multicolumn{1}{c|}{\textbf{1 input}}           & \multicolumn{1}{c|}{444463}             & \multicolumn{1}{c|}{4.1$\times$}               & \multicolumn{1}{c|}{40327680}              & \multicolumn{1}{c|}{3.2$\times$}               \\
\multicolumn{1}{|c|}{}                                                                                   & \multicolumn{1}{c|}{\textbf{2 inputs}}          & \multicolumn{1}{c|}{108205}             & \multicolumn{1}{c|}{16.9$\times$}              & \multicolumn{1}{c|}{12602400}              & \multicolumn{1}{c|}{10.2$\times$}              \\
\multicolumn{1}{|c|}{}                                                                                   & \multicolumn{1}{c|}{\textbf{3 inputs}}          & \multicolumn{1}{c|}{24486}              & \multicolumn{1}{c|}{74.8$\times$}              & \multicolumn{1}{c|}{3943200}               & \multicolumn{1}{c|}{32.72$\times$}             \\
\multicolumn{1}{|c|}{}                                                                                   & \multicolumn{1}{c|}{\textbf{All inputs}}        & \multicolumn{1}{c|}{6847}               & \multicolumn{1}{c|}{267.5$\times$}             & \multicolumn{1}{c|}{1303200}               & \multicolumn{1}{c|}{99$\times$}                \\ \hline
\multicolumn{6}{l}{* We synthesize a random selectivity between (0, 0.33) for each filter operation.}  
\end{tabular}}
\end{table}
We now study how synopses coverage impacts query performance. We synthesize and test a 4-way join query under two controlled scenarios: (i) {\em JK coverage:} We focus on varying the level of JK coverage, starting from 1 out of 3 JKs to full coverage, while ensuring no filter synopses are present. We then measure how this impacts query performance; and (ii) {\em Filter coverage}: We maintain full JK coverage and change the coverage of filter synopses on the query's input tables, ranging from 1 out of 4 inputs to full coverage. This allows us to examine the isolated effect of filter synopsis coverage on performance. Results are in Table~\ref{tab:synop}. 

\vspace{3pt}\noindent{\bf Observation 7. For both groups, query efficiency significantly improves as synopses coverage grows. Nevertheless, even at the lowest coverage level, queries can still achieve notable speedups.} Even with minimal synopsis coverage—like boosting only one join or applying synopses to just one input table—we observe significant speedups of $6.6\times$ and $4.1\times$, respectively. This demonstrates the potential for substantial performance gains even with limited synopsis availability. Moreover, real-world workloads often involve joins on the same keys and similar filtered inputs (e.g., HealthLNK workloads), suggesting that high synopsis coverage is achievable even with a small set of representative workloads.  As demonstrated by Table~\ref{tab:synop}, adding even a single additional synopsis, whether for join keys or table filters, can yield a substantial performance boost (up to 10$\times$) in query execution efficiency.
}

\subsection{Scaling experiments}\label{subsec:scale-exp}
To address {\bf Question-3}, we stress \sys with two types of scaled workloads: (i) {\em scaled data:} we duplicate the raw dataset to sizes of $2\times$, $4\times$, and $8\times$ and evaluate default testing queries; \re{{\em (ii) scaled query complexity:} We use standard inputs, but simulate complex multi joins (up to 9-way) by chaining together multiple join workloads.} The results are shown in Figure~\ref{fig:scale} and Table~\ref{tab:jscale}.
\begin{figure}[ht]
\captionsetup[sub]{font=small,labelfont={bf,sf}}
    \begin{subfigure}[b]{0.49\linewidth}
    \centering\includegraphics[width=1\linewidth,interpolate=false]{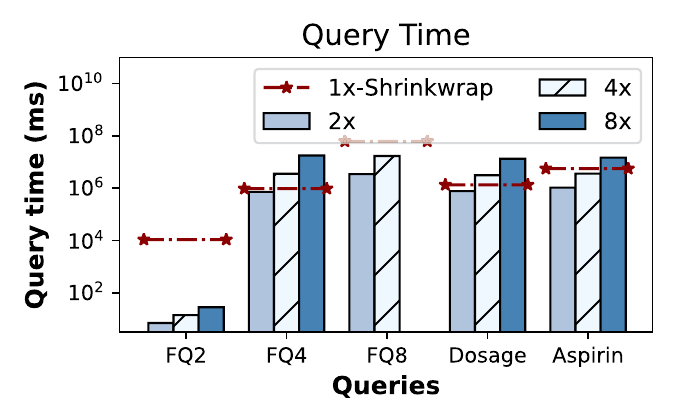}
    \end{subfigure}
    \begin{subfigure}[b]{0.49\linewidth}
    \centering\includegraphics[width=1\linewidth,interpolate=false]{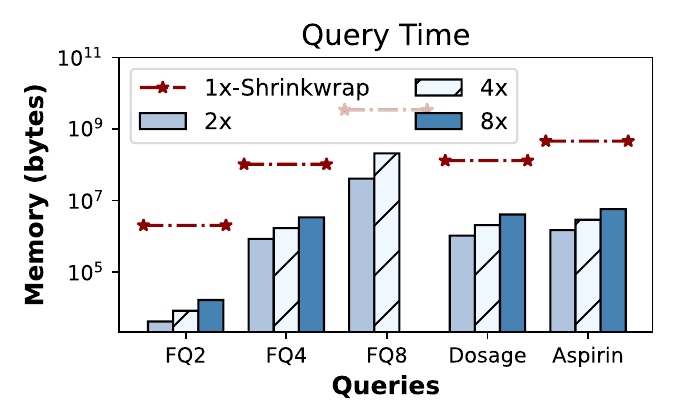}  
    \end{subfigure}
   \caption{\re{Scaling data experiments.}}
   \label{fig:scale}
\end{figure}

\vspace{2pt}\noindent{\bf Observation 8. \sys shows large potential to scale up to multi-million data, even for complex 5-way joins.} Figure~\ref{fig:scale} shows \sys's effective scaling: up to $8\times$ data for linear queries and binary joins, and up to $4\times$ data for complex 5-way joins like Q8. For instance, Q2 can be completed within 290ms under $8\times$, and in fact, since selection is bypassed (due to index access), thus the cost is mainly on I/O costs. Q4 finishes in 289 minutes at the same scale $8\times$, while the more complex 5-way join Q8 takes less than 280 minutes for $4\times$ data. As a reference, Shrinkwrap would require over 1035 minutes to complete Q8 even with unscaled data. 

\re{\vspace{3pt}\noindent{\bf Observation 9. \sys can effectively process very large joins (e.g., 9-ways).} Table~\ref{tab:jscale} shows that the query processing time of \sys at 9-way scale can still be $4.4\times$ faster than Shrinkwrap at 3-way join scale (Q6), and the memory improvement is even more evident that is 328$\times$ smaller. We stress that these significant memory savings can become even more crucial when processing massive datasets. Techniques like Shrinkwrap or SMCQL, might be forced to rely on much slower persistent storage to handle intermediate results that exceed memory capacity. In contrast, \sys can still maintain a fully in-memory query mode, potentially leading to even more pronounced efficiency gains in such scenarios.
}

\begin{table}[]
\caption{\re{Scaling query complexity experiments}}
\label{tab:jscale}
\scalebox{0.72}{
\begin{tabular}{|c|c|c|c|c|}
\hline
\textbf{Join scale}                      & \textbf{Query time (ms)} & \multicolumn{1}{l|}{\bf Improv.}  & \textbf{Memory (bytes)} & \multicolumn{1}{l|}{\textbf{Improv.}} \\ \hline
\textbf{Shrinkwrap Q6}       & 58342782                 & \multicolumn{1}{l|}{baseline} & 3354165360              & \multicolumn{1}{l|}{baseline}         \\ \hline
\textbf{7 (FQ8 $\Join$ FQ3)}             & 1330602                  & 43.8$\times$                          & 5334360                 & 628$\times$                                   \\
\textbf{8 (FQ8 $\Join$ FQ5)}             & 11789800                 & 4.9$\times$                           & 7901016                 & 424$\times$                                   \\
\textbf{9 (FQ8 $\Join$ FQ3 $\Join$ FQ4)} & 13306150                 & 4.4$\times$                           & 8999352                 & 328$\times$                                   \\ \hline
\end{tabular}}
\end{table}

\eat{
\section{Discussion}\label{sec:discuz}
In this section, we discuss potential extensions to accommodate broader data models and possible adaptations of \sys designs to other secure primitives.

\vspace{4pt}\noindent{\bf Supporting growing data.} In general \sys considers read-intensive analytical workloads and and a static data model, which is the same as current SOTA SCA efforts~\cite{poddar2021senate,bater2018shrinkwrap,liagouris2023secrecy, volgushev2019conclave}. In this section, we will briefly discuss how \sys could be extended to support the growing data model where new records are continuously added to the logical database~\cite{wang2022incshrink, wang2021dp, zhang2023longshot}. One possible strategy involves segmenting the large logical database into discrete, non-overlapping time intervals, each represented by its own database with independent synopses. Specifically, for new incoming data, \sys will initially cache it until a designated time interval concludes. After this period, \sys will batch all the cached data into a new database corresponding to the past time interval and release synopses for this database. As the data within these intervals is disjoint, queries can be easily partitioned and independently optimized and executed across corresponding databases in parallel, before being aggregated. Note that prior to releasing new synopses, all cached data will be processed using standard SCA processing (e.g. SMCQL). Since each synopses release covers only newly inserted data, the total privacy loss adheres to parallel composition~\cite{dwork2014algorithmic}, and thus the bounded privacy loss still hold.


\vspace{4pt}\noindent{\bf Adaptability to other secure primitives.} Beyond MPC-based SCA, the key design insights from \sys are transferable to other secure outsourced databases such as the TEE-based analytical systems~\cite{qiudoquet,qin2022adore,eskandarian2017oblidb, zheng2017opaque} or searchable encryption based private key-value store~\cite{amjad2021dynamic, groce2019cheaper, chen2018differentially}. This is because \sys's core components, including synopses management, indexing, query planning, and compaction estimation, can all be executed without relying on secure computation primitives. For example, in TEE databases, one may conduct a similar query planning outside the TEE, with the optimal plan then executed within the TEE for fast oblivious processing.  Similarly, \sys's design principles could be applied to searchable encryption to construct private indexes in advance on sorted data, enable fast and private information retrieval.
}

\section{Related Work}\label{sec:related}
\noindent{\bf SCA systems.} Two main approaches exist for designing MPC-based SCA systems.  The first is {\em peer-to-peer (P2P) paradigm}~\cite{aggarwal2005two, mouris2024delegated, poddar2021senate, volgushev2019conclave, wang2021secure}, where the goal is to improve efficiency by decomposing analytical queries and pushing them to data owners, so that they can either directly process in clear or running MPCs across a small group of parties. Unfortunately, this approach imposes large overhead on data owners, especially for complex operations like joins. Given that real-world data owners often lack robust computing resources and  service capabilities, the P2P paradigm is hard to scale and support reliable SCA services to external analysts. The other paradigm is the {\em server-aided-MPC model}~\cite{wang2021dp, bater2018shrinkwrap, bater2020saqe, wang2022incshrink, zhang2023longshot, roy2020crypt, kamara2011outsourcing, tan2021cryptgpu, knott2021crypten, liagouris2023secrecy, mohassel2017secureml}. This model allows data owners to outsource both expensive MPC computations and secure data storage to a set of capable servers, which can then jointly evaluate MPC to provide reliable SCA services. \sys is built upon the server-aided-MPC model and under a strong ``all but one''corruption assumption. Moreover, \sys's core design is protocol-agnostic, which allows interoperability with various MPC models, including the P2P or a weaker corruption where a supermajority of servers need to be honest~\cite{tan2021cryptgpu, knott2021crypten, liagouris2023secrecy}.


\vspace{3pt}\noindent{\bf DP leakages.} Leakage-abuse attacks~\cite{cash2015leakage, kellaris2016generic, blackstone2019revisiting, oya2021hiding, grubbs2018pump, shang2021obfuscated, zhang2016all}, exploit data-dependent processing patterns, are persistent threats to SCA systems. To mitigate these risks, oblivious computation~\cite{keller2014efficient, pinkas2010oblivious, stefanov2011towards, chang2022towards, mishra2018oblix, crooks2018obladi, jonsson2011secure, sasy2022fast, asharov2020bucket, wang2014oblivious, pettai2015combining, wang2023private, tinoco2023enigmap} have become the {\em de facto} solution.
While this technique ensures the strongest privacy guarantees by eliminating any data-dependent leakages, it also introduces a fundamental contention with modern database optimizations, which often rely heavily on data-dependent operations~\cite{bater2018shrinkwrap,wang2021dp,wagh2021dp,wang2022incshrink}. To this end, many recent efforts seek a practical balance in the privacy-performance trade-off by allowing controlled leakage under DP~\cite{bater2018shrinkwrap,wang2021dp,wagh2021dp,wang2022incshrink, groce2019cheaper, he2017composing, zhang2023longshot, qin2022adore, chan2022foundations, wagh2016differentially, chu2021differentially, qiudoquet, patel2019mitigating}. However, a common issue of these approaches is unbounded privacy loss. While some works propose to address this~\cite{zhang2023longshot, bogatov2021epsolute, qiudoquet}, their approaches are restricted to only simple linear queries. 
\sys addresses all these limitations together, and to our knowledge, is the first SCA system that can simultaneously ensure both bounded privacy and lossless results for complex SPJA queries.



\vspace{4pt}\noindent{\bf SCA query planning.} Query planning~\cite{silberschatz2011database} is crucial in conventional databases. Conventional planners can exploit size disparities across different query plans to choose efficient ones with smaller sizes~\cite{elmasri2020fundamentals, harmouch2017cardinality, han2021cardinality}. However, such techniques use data-dependent information and are typically prohibited in SCA. A handful of studies~\cite{bater2017smcql, volgushev2019conclave, poddar2021senate, liagouris2023secrecy} that explore query planning within SCA frameworks primarily rely on data-independent metrics for planning, which usually lead to only moderate optimizations. Shrinkwrap~\cite{bater2018shrinkwrap} introduced a private planning method that optimally compacts intermediate sizes by efficiently allocating privacy budgets to minimize dummy data. However, it cannot pre-determine an optimal join order. \sys offers an advanced query planner capable of pre-estimating intermediate sizes and comparing execution costs among different plan structures before runtime.

\section{Conclusion}
We introduce \sys, the first SCA system that simultaneously supports: (i) handling complex queries with bounded privacy loss; (ii) advanced query planning that effectively exploit plan intermediate sizes before runtime; and (iii) delivering exact query results without missing tuples. This is achieved through a novel synopses-assisted SCA design, where a set of private table statistics are released with one-time privacy cost to guide subsequent secure SCA planning and processing. 

\begin{acks}
 This work was supported by the NSF grant OAC-2419821, IUIAS Collaborative Award, and IU Faculty Startup Grant. 
\end{acks}

\balance
\bibliographystyle{ACM-Reference-Format}
\bibliography{sample}

\end{document}